\newtheorem{remark}{Remark}
\newtheorem{theorem}{Theorem}
\newtheorem{lemma}{Lemma}
\newtheorem{prop}{Proposition}
\newtheorem{coro}{Corollary}
	\newcommand{\dd}{\mathrm{d}}
\begin{document}
%
\title{On the Transmission Probabilities in Quantum Key Distribution Systems over FSO Links}
%
%
%

\author{Hui Zhao, and Mohamed-Slim Alouini \vspace{-1cm}
\thanks{This paper was accepted for publication in IEEE Transactions on Communications on Oct. 4, 2020.}
\thanks{This work was funded by the office of sponsored research (OSR) at KAUST, and  the European Research Council under the EU Horizon 2020 research and innovation program/ERC grant agreement no. 725929 (ERC project DUALITY).}

\thanks{H. Zhao was with the Computer, Electrical, and Mathematical Science and Engineering Division, King Abdullah University of Science and Technology, Thuwal 23955-6900, Saudi Arabia, and he is now with the Communication Systems Department, EURECOM, Sophia Antipolis 06410, France (email: hui.zhao@kaust.edu.sa).}
\thanks{M.-S. Alouini is with the Computer, Electrical, and Mathematical Science and Engineering Division, King Abdullah University of Science and Technology, Thuwal 23955-6900, Saudi Arabia (email: slim.alouini@kaust.edu.sa).}
\thanks{Digital Object Identifier 10.1109/TCOMM.2020.3030250}
}

\maketitle

\begin{abstract}
In this paper, we investigate the transmission probabilities in three cases (depending only on the legitimate receiver, depending only the eavesdropper, and depending on both legitimate receiver and eavesdropper) in quantum key distribution (QKD) systems over free-space optical links. To be more realistic, we consider a generalized pointing error scenario, where the azimuth and elevation pointing error angles caused by stochastic jitters and vibrations in the legitimate receiver platform are independently distributed according to a non-identical normal distribution.  Taking these assumptions into account, we derive approximate expressions of transmission probabilities by using the Gaussian quadrature method. To simplify the expressions and get some physical insights, some asymptotic analysis on the transmission probabilities is presented based on  asymptotic expression for the generalized Marcum Q-function when the telescope gain at the legitimate receiver approaches to infinity. Moreover, from the asymptotic expression for the generalized Marcum Q-function, the asymptotic outage probability over Beckmann fading channels (a general channel model including Rayleigh, Rice, and Hoyt fading channels) can be also easily derived when the average signal-to-noise ratio is sufficiently large, which shows the diversity order and array gain.
\end{abstract}

\begin{IEEEkeywords}
Beckmann distribution, free-space optics, generalized pointing errors, quantum key distribution, and transmission probability.
\end{IEEEkeywords}

%
\IEEEpeerreviewmaketitle

\section{Introduction}
%
%
%
%
\IEEEPARstart{Q}{uantum} communication provides a promising solution to break the Shannon channel capacity limit \cite{Shannon} and achieve an unprecedented level of security \cite{Lo_Science} simultaneously, two competing tasks which cannot be realized in conventional technologies \cite{Sasaki}. In this context, quantum key distribution (QKD) or quantum cryptography is a method for sharing the secret cryptographic keys between two legitimate parties to achieve the secure communications by taking advantage of the laws of quantum mechanics and quantum non-cloning theorem \cite{Lo,Tittel}.  However, further investigation and real application of QKD did not attract much attention until it was proved that the quantum computer was able  to break public-key cryptosystems, which are commonly used in  the modern cryptography \cite{Inoue,Cheng}.

The connection implementation of QKD includes two main medium, i.e., fiber cable and free-space optics (FSO). Compared to the fiber cable, the implementation of QKD over FSO links is more convenient and easier due to the flexibility of the free space connection and satellite support for distributing quantum keys worldwide \cite{Selman,Trinh}. Moreover, FSO is an alternative transport technology to interconnect high capacity networking segments in current and future communication systems, because of its cost-effectiveness, high-bandwidth availability, and interference-immunity \cite{Imran}. The authors in \cite{R1}--\cite{R8} have presented some basic performance analysis works over classical FSO links or hybrid RF-FSO links, but those previous works do not consider the QKD mechanism. Actually, the research work about QKD over FSO links in the communications field is considerably limited.

In practical systems, the security of QKD strongly depends on the device implementation \cite{Xu}--\cite{Meda}. That is, a third party may have a side channel by  making use of any deviation of a QKD device from the theoretical model. For example, two zero-error attacks on commercial QKD systems were reported where the defects in quantum signal encoding and detection were exploited \cite{Xu,Lydersen}. Besides, some imperfections in QKD designs can be also exploited by a plethora of quantum hacking attacks using current technologies \cite{Weier}.

In real commercial QKD implementations, a single-photon mechanism is typically used to convey the information, and the corresponding common detection scheme is called single photon avalanche photodiodes (SPADs) where the SPAD diode is operated in Geiger mode (reverse-biased above the breakdown voltage to create an avalanche) to count single-photons \cite{Shi}. However, this detection scheme can lead to the information leakage, because the avalanche created by the incoming photon can emit a secondary photon which may be intercepted by a third party, namely the eavesdropper. This secondary photon emission (the photon emission in the sender is the first emission) is called backflash, which is quenched along with the avalanche, i.e., the backflash is quenched if the detection bias is lowered below the breakdown voltage \cite{Meda,Shi}. Previous measurements show that the probability of detecting backflash is greater than $0.4 \%$, and more than $0.04$ photons emerging from the devices are contained in the backflash given the $10 \%$ nominal detection efficiency of SPADs \cite{Meda,Lacaita}. These measurements provide a reference for the backflash resulting in the information leakage, although the measurement results may change in different detector types and optical components.

An unevadable vulnerability in the FSO QKD systems is the random pointing error due to stochastic jitters and vibrations which can be caused by building sway, thermal expansion, and week earthquakes, in the urban FSO systems \cite{Arnon}--\cite{Yang}. Similarly, for satellite communications, there are internal and external reasons for stochastic vibrations \cite{Arnon_proc}. For example, the structure deformations caused by temperature gradients, and the gravitational force inhomogeneity over the satellite orbit, are two main external reasons for  stochastic vibrations in  satellite systems. The internal sources include electronic noise, antenna pointing operation, and solar array driver \cite{Abdi}.

The authors in \cite{Kupferman} first investigated the performance of received powers at both the legitimate receiver and eavesdropper in FSO QKD systems with taking random pointing errors into account, and derived the closed-form expressions for the corresponding average received powers. However, the authors in  \cite{Kupferman} assumed that the azimuth and elevation pointing errors are identically independently distributed, and more specifically, these two pointing errors are modeled by Gaussian distribution with zero-mean and the same variance which may be a little ideal. In the practical systems, the mean and variance of these two pointing errors are typically different. Moreover, the authors in \cite{Kupferman} did not consider the transmission probability depending on the received power threshold, which is very important and useful for the system evaluation and design. This is because we need to know the transmission probability depending on some conditions in the average level, apart from the average received powers, when evaluating and designing FSO QKD systems.

Actually, the pointing error angle, divided into the azimuth and elevation pointing errors, can be modeled by the Beckmann distribution \cite{Juan_TCOM}--\cite{Slim_book}, a generalized model including the Rayleigh, Hoyt and Rice models. Specifically, the Beckmann model is reduced into the Hoyt model for zero-mean and different variance of two sub-part pointing errors, and the different non-zero mean and equal variance case refers to the Rice case in the Beckmann distribution. The zero-mean and equal variance case considered in \cite{Kupferman} is the most simple scenario in the Beckmann distribution, denoted by the Rayleigh case. The authors in \cite{Zhao_CL_QKD} expanded the pointing error model in \cite{Kupferman} to the Beckmann distribution. In \cite{Zhao_CL_QKD}, exact closed-form expressions for the average received powers  at both the legitimate receiver and eavesdropper were derived, as well as finding the maximum points of the telescope gain at the legitimate receiver in some special cases analytically. However, the authors in \cite{Zhao_CL_QKD} still did not investigate the transmission probabilities depending on a variety practical conditions, which is also important for the FSO QKD  system evaluation.  Similar to \cite{Nguyen}--\cite{Jiang}, the transmission probability in this paper is defined as the probability that the received power  is satisfied one or more pre-set thresholds in the FSO QKD system, which is  obviously a natural variant of the outage probability in traditional communications \cite{Slim_book}.

Motivated by observing those facts outlined above, we investigate the performance of QKD systems over FSO links in terms of transmission probabilities depending on three different conditions. The main contributions of this paper are summarized as follows:
\begin{enumerate}
\item Closed-form expressions with a high accuracy for transmission probabilities depending on three different conditions, i.e., legitimate receiver, eavesdropper and both legitimate receiver and eavesdropper, are derived based on the Gaussian quadrature rule, where the accuracy grows with increasing the summation terms in the Gaussian quadrature.

\item Asymptotic expression for the generalized Marcum Q-function is derived after some mathematical manipulations, which can be used to derive the asymptotic outage probability over Beckmann fading channels when the average signal-to-noise ratio (SNR) approaches to infinity, showing the diversity order and array gain, since those two metrics govern the outage probability behaviour in high SNRs.

\item By using the asymptotic result for the generalized Marcum Q-function, the asymptotic expressions for three transmission probabilities are easily derived, which are valid in the high value region of the telescope gain at the legitimate receiver. Besides providing some insights, these asymptotic expressions are significantly concise, resulting in a much faster calculation than the analytical expressions that need to be computed based on the Gaussian quadrature rule.

\item We also present some specific expressions for those three transmission probabilities in some simplified cases which result in exact expressions or more concise forms. More specifically, exact closed-form expressions in Rayleigh, Hoyt and Rice cases (three special cases of the Beckmann distribution) for the transmission probability depending only on the legitimate receiver are given. In the Rayleigh case, we present a more concise expression for the transmission probability depending  on both the legitimate receiver and eavesdropper.
\end{enumerate}

The remainder of this paper is organized as follows. The system model is presented in Section \ref{model}. The transmission probabilities depending on three different conditions are analyzed in Sections \ref{TPLR_sec}, \ref{TPE_sec} and \ref{TPRE_sec}, respectively. In Section \ref{numerical_sec}, some numerical results are generated and used to validate the correctness of derived closed-form expressions, as well as presenting some interesting comparisons. Section \ref{conclude_sec} finally concludes the paper.

\section{System Model}\label{model}
\begin{figure}
\setlength{\abovecaptionskip}{0pt}
\setlength{\belowcaptionskip}{10pt}
\centering
\includegraphics[width= 2.5 in]{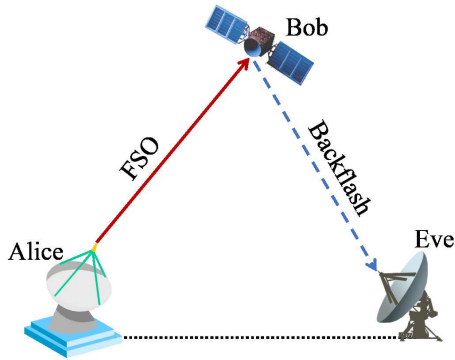}
\caption{Secure QKD System Over FSO links}
\label{system}
\end{figure}

As depicted in Fig. \ref{system}, there is a sender (Alice) located on a absolutely static platform\footnote{We can also consider a non-static platform for the sender. In a relative motion aspect, if the sender is assumed to be relatively static to the receiver, this will induce the same analysis.} communicating a legitimate receiver (Bob) located on a platform suffering from stochastic jitters and vibrations, such as a laser satellite system, over a FSO link in open areas. This vibrating platform in the legitimate receiver results in a random pointing error, where the stochastic deviation angle ($\theta$) is divided into two parts, i.e., the azimuth pointing error ($\theta_H$) and the elevation pointing error ($\theta_V$), and therefore, $\theta$ can be written as
\begin{align}
\theta  = \sqrt {\theta _H^2 + \theta _V^2}, \notag
\end{align}
where $\theta_H$ and $\theta_V$ are normally assumed to be independent Gaussian random variables, i.e., $\theta_H \sim \mathcal{N}(\mu_H,\sigma_H^2)$ and $\theta_V \sim \mathcal{N}(\mu_V, \sigma_V^2)$, where $\mu_H$ and $\sigma_H^2$ (or $\mu_V$ and $\sigma_V^2$) represent the mean and variance of $\theta_H$ (or $\theta_V$), respectively.

The SPADs detection scheme adopted by the legitimate receiver is assumed in this system setting. The  received power at Bob is\footnote{This power can be also regarded as an average received power over instantaneous received photon counts influenced by both the shot noise and the dead time of the SPAD receiver. Here, we focus on the transmission probability depending on the average power, rather than the instantaneous performance.} \cite{Arnon,Arnon_proc,Kupferman} 
\begin{align}
{P_D} (\theta) = {K_1}L\left( \theta  \right){G_D}, \notag
\end{align}
where  $L(\theta)=\exp(-G_D \theta^2)$ is the pointing loss factor, $G_D=(\pi d_D/ \lambda_1)^2$ is the telescope gain at the legitimate receiver, $d_D$ is the unobscured circular aperture diameter of the telescope, $\lambda_1$ is the wavelength, and $K_1$ is a constant depending only on the system design, given by
\begin{align}
{K_1} = {\eta _q}{P_S}{G_S}{\eta _S}{\eta _D}\frac{{{L_A}\left( {{Z_1}} \right)}}{{Z_1^2}}{\left( {\frac{\lambda_1 }{{4\pi }}} \right)^2}, \notag
\end{align}
in which $P_S$ is the optical transmitter power, $\eta_q$ is the quantum efficiency, $G_S$ is the telescope gain of the sender, $L_A(\cdot)$ is the atmospheric loss, and $Z_1$ is the distance between the sender and legitimate receiver.

As discussed in the introduction section, the SPADs detection scheme leads to the backflash due to the secondary photon emission caused by the avalanche. A third party (eavesdropper, Eve) can make use of this backflash to intercept the secondary photon, and thereby wiretapping the conveyed information from the sender to the legitimate receiver. The received power at the eavesdropper is given by \cite{Kupferman,Zhao_CL_QKD}
\begin{align}
{P_E}\left( {{\theta _E},\alpha } \right) = {K_2}{P_D}\left( \theta  \right)L\left( {{\theta _E}} \right){G_D}, \notag
\end{align}
where ${\theta _E} = \sqrt {{{\left( {{\theta _V} + \alpha } \right)}^2} + \theta _H^2}$, $\alpha$ is the pointing direction error angle in the wiretap FSO link, and $K_2$ is a system constant, given by
\begin{align}
{K_2} = {\eta _B}{\eta _q}{\eta _D}{\eta _E}{G_E}\frac{{{L_A}\left( {{Z_2}} \right)}}{{Z_2^2}}{\left( {\frac{{{\lambda_2 }}}{{4\pi }}} \right)^2}, \notag
\end{align}
in which $\eta_B$ is the probability of backflash, $Z_2$ is the distance between the legitimate receiver and eavesdropper, $\eta_E$ and $G_E$ are the optical efficiency and telescope gain of the eavesdropper respectively, and $\lambda_2$ is the backflash wavelength.

\section{Transmission Probability Depending only on Legitimate Receiver}\label{TPLR_sec}
In this section, we want to evaluate the transmission probability performance given a received power threshold ($\lambda_D$) at the legitimate receiver.
In this context, the transmission probability  depending only on the legitimate receiver (TPLR) is defined as
\begin{align}
{\rm{TPLR}} = \Pr \Big\{ {{P_D}\left( \theta  \right) \ge {\lambda _D}} \Big\} = \Pr \Big\{ {{K_1}{G_D}L\left( \theta  \right) \ge {\lambda _D}} \Big\},
\end{align}
which can be further written by substituting the expression for $L(\theta)$, given by
\begin{align}\label{TPLR_def}
{\rm{TPLR}} = \Pr \left\{ {{\theta ^2} \le \frac{{ - 1}}{{{G_D}}}\ln \frac{{{\lambda _D}}}{{{K_1}{G_D}}}} \right\}.
\end{align}
Let ${\Theta _D} = \frac{{ - 1}}{{{G_D}}}\ln \frac{{{\lambda _D}}}{{{K_1}{G_D}}}$. By substituting the probability density functions (PDFs) of $\theta_V$ and $\theta_H$ into \eqref{TPLR_def}, the TPLR for arbitrary  $\mu_V$, $\mu_H$, $\sigma_V$ and $\sigma_H$ can be derived as
\begin{align}
&{\rm{TPLR}} = \Pr \Big\{ {{\theta ^2} \le {\Theta _D}} \Big\} = \Pr \left\{ {\sqrt {\theta _V^2 + \theta _H^2}  \le \sqrt {{\Theta _D}} } \right\}  \notag\\
&= \iint\limits_{{\sqrt {\theta _V^2 + \theta _H^2}  \le \sqrt {{\Theta _D}} }} f_{\theta_V} (\theta_V) f_{\theta_H} (\theta_H) \dd \theta_V \dd \theta_H 
 = \iint\limits_{{\sqrt {\theta _V^2 + \theta _H^2}  \le \sqrt {{\Theta _D}} }} {\frac{\exp \left( { - \frac{{{{\left( {{\theta _V} - {\mu _V}} \right)}^2}}}{{2\sigma _V^2}} - \frac{{{{\left( {{\theta _H} - {\mu _H}} \right)}^2}}}{{2\sigma _H^2}}} \right)}{{2\pi {\sigma _V}{\sigma _H}}}}    \dd{\theta _V}\dd{\theta _H},
\end{align}
where the double integral, unfortunately, cannot be solved in a closed-form, and therefore, there are some approximation methods for this double integral, such as \cite{Juan_TCOM} and \cite{Zhu}. However, those approximation methods proposed by \cite{Juan_TCOM} and \cite{Zhu} are still complicated for calculation.

Here, we provide another approximation method based on the Gaussian quadrature rule \cite[Ch. 9]{Gaussian_Quadrature}, shown in Theorem \ref{TPLR_theorem}.

\begin{theorem}\label{TPLR_theorem}
An approximate result for the TPLR based on the Gaussian quadrature rule  is
\begin{align}\label{TPLR_final}
{\rm TPLR} \approx 1 - \frac{1}{{\sqrt \pi  }}\sum\nolimits_{i = 1}^N {{\omega _i}{f_{\rm TPLR}}\left( {{x_i}} \right)},
\end{align}
where $N$, $\omega_i$, and $x_i$ are the summation terms, weights, and selected points of the Gauss-Hermite quadrature (GHQ, a special case of Gaussian quadrature), and $f_{\rm TPLR} (\cdot)$ is given by
\begin{align}
&{f_{{\rm{TPLR}}}}\left( x \right) = {Q_{\frac{1}{2}}}\Biggm( \sqrt \lambda  , \frac{1}{\sigma _H} \sqrt {{{\Theta _D} - {{\left( {\sqrt 2 {\sigma _V}x + {\mu _V}} \right)}^2}}}  
  \times  \mathbb{I}\left\{ {{\Theta _D} \ge {{\left( {\sqrt 2 {\sigma _V}x + {\mu _V}} \right)}^2}} \right\} \Biggm),
\end{align}
in which $\lambda=\mu_H^2 / \sigma_H^2$ represents the  noncentrality parameter, $Q_{\cdot}(\cdot,\cdot)$ denotes the generalized Marcum Q-function \cite{Gradshteyn}. and $\mathbb{I}\{\cdot\}$ denotes the indicator function, i.e.,
$
\mathbb{I}\left\{ \mathcal{A} \right\} = \begin{cases}
{1,}&{\text{if } \mathcal{A} \text{ is true};}\\
{0,}&{\text{otherwise}.}
\end{cases}
$
\end{theorem}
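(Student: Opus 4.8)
The plan is to collapse the two-dimensional integral into a single integral whose integrand is a generalized Marcum Q-function, and then to apply the Gauss-Hermite rule to that single integral. First I would write the double integral as an iterated integral, performing the $\theta_H$ integration on the inside for each fixed $\theta_V$. The disk constraint $\sqrt{\theta_V^2+\theta_H^2}\le\sqrt{\Theta_D}$ becomes $|\theta_H|\le\sqrt{\Theta_D-\theta_V^2}$ when $\theta_V^2\le\Theta_D$, and yields an empty range (so the inner integral vanishes) when $\theta_V^2>\Theta_D$. Writing $b=\sqrt{\Theta_D-\theta_V^2}$, the inner integral is the Gaussian probability $\Pr\{-b\le\theta_H\le b\}$ for $\theta_H\sim\Nc(\mu_H,\sigma_H^2)$, which equals $1-Q\bigl((b-\mu_H)/\sigma_H\bigr)-Q\bigl((b+\mu_H)/\sigma_H\bigr)$, where $Q(t)=\tfrac{1}{\sqrt{2\pi}}\int_t^\infty e^{-u^2/2}\dd u$ is the one-dimensional Gaussian tail probability (not to be confused with the Marcum function).

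The key step, and the one I expect to be the main obstacle, is to recognize the sum of the two Gaussian tails as a single generalized Marcum Q-function of order $\tfrac12$. Starting from the integral representation of $Q_{1/2}$ and using $I_{-1/2}(z)=\sqrt{2/(\pi z)}\cosh z$, one obtains the half-integer identity $Q_{1/2}(a,c)=Q(c-a)+Q(c+a)$; setting $a=\mu_H/\sigma_H$ and $c=b/\sigma_H$, and noting that $Q_{1/2}$ is even in its first argument so that $a$ may be replaced by $\sqrt\lambda$, the inner integral becomes exactly $1-Q_{1/2}\bigl(\sqrt\lambda,\,\sqrt{\Theta_D-\theta_V^2}/\sigma_H\bigr)$. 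The indicator function in the theorem is precisely the bookkeeping device that extends this formula to the regime $\theta_V^2>\Theta_D$: multiplying the second argument by $\mathbb{I}\{\Theta_D\ge(\cdot)^2\}$ sends that argument to zero there, and since $Q_{1/2}(\sqrt\lambda,0)=1$ the bracket $1-Q_{1/2}$ correctly vanishes, which is what lets the outer integral be taken over all of $\mathbb{R}$.

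Finally I would recast the remaining single integral over $\theta_V$ into Gauss-Hermite form through the substitution $\theta_V=\sqrt2\,\sigma_V x+\mu_V$, under which the Gaussian weight $\tfrac{1}{\sqrt{2\pi}\sigma_V}\exp\bigl(-(\theta_V-\mu_V)^2/(2\sigma_V^2)\bigr)\dd\theta_V$ becomes $\tfrac{1}{\sqrt\pi}e^{-x^2}\dd x$ and the Marcum argument becomes exactly $f_{\rm TPLR}(x)$. Splitting the integrand as $1-f_{\rm TPLR}(x)$, the constant part contributes $\tfrac{1}{\sqrt\pi}\int_{-\infty}^\infty e^{-x^2}\dd x=1$, while applying the Gauss-Hermite rule $\int_{-\infty}^\infty e^{-x^2}g(x)\dd x\approx\sum_{i=1}^N\omega_i g(x_i)$ to the remaining part gives $\tfrac{1}{\sqrt\pi}\sum_{i=1}^N\omega_i f_{\rm TPLR}(x_i)$. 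Combining the two yields the claimed expression \eqref{TPLR_final}. The only approximation is introduced at this last quadrature step; every earlier manipulation is an exact identity.
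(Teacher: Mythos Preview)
Your proposal is correct and follows essentially the same approach as the paper: condition on $\theta_V$, express the inner probability over $\theta_H$ in terms of $Q_{1/2}$, substitute $\theta_V=\sqrt{2}\sigma_V x+\mu_V$, and apply Gauss--Hermite quadrature. The only difference is cosmetic: the paper obtains the $Q_{1/2}$ expression by recognizing $\theta_H^2/\sigma_H^2$ as a unit-degree non-central chi-squared variable and citing its CDF, whereas you derive the same identity explicitly from the half-integer Bessel/Gaussian-tail formula $Q_{1/2}(a,c)=Q(c-a)+Q(c+a)$.
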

\begin{proof}
See Appendix \ref{proof_TPLR_theorem}.
\end{proof}

\begin{remark}
Although a high accuracy requires many terms in \eqref{TPLR_final}, resulting in a much slower calculation, especially when $Q_{\frac{1}{2}}(\cdot)$ cannot be directly calculated in some softwares, such as Matlab, Theorem \ref{TPLR_theorem} provides an analytical tool to investigate the TPLR. We will present an asymptotic expression presented in the III-A subsection, rather than \eqref{TPLR_final}, for getting a high accuracy result in a special case.
\end{remark}

As $\theta_V$ follows the Gaussian distribution, for $\sigma_V^2 \ll \sigma_H^2$, according to \cite[Eq. (4)]{Pan_TVT}, the TPLR in \eqref{TPLR_integral} can be robustly approximated by
\begin{align}
{\rm{TPLR}} &= {\mathbb{E}_{{\theta _V}}}\left\{ {1 - {Q_{\frac{1}{2}}}\left( {\sqrt \lambda  ,\sqrt {\frac{{{\Theta _D} - \theta _V^2}}{{\sigma _H^2}}\mathbb{I}\left\{ {{\Theta _D} \ge \theta _V^2} \right\}} } \right)} \right\} \notag\\
&\mathop  \approx \limits^{\sigma _V^2 \ll \sigma _H^2} \frac{2}{3}{\phi _{{\rm{TPLR}}}}\left( {{\mu _V}} \right) + \frac{1}{6}{\phi _{{\rm{TPLR}}}}\left( {{\mu _V} + \sqrt 3 {\sigma _V}} \right) 
+ \frac{1}{6}{\phi _{{\rm{TPLR}}}}\left( {{\mu _V} - \sqrt 3 {\sigma _V}} \right),
\end{align}
where
\begin{align}
{\phi _{{\rm{TPLR}}}}\left( x \right) = 1 - {Q_{\frac{1}{2}}}\left( {\sqrt \lambda  ,\sqrt {\frac{{{\Theta _D} - {x^2}}}{{\sigma _H^2}}\mathbb{I}\left\{ {{\Theta _D} \ge {x^2}} \right\}} } \right).
\end{align}
This robust approximation for $\sigma_V^2 \ll \sigma_H^2$ was proposed by \cite{Pan_TVT}--\cite{Zhao_CL}.  This robust result becomes more tighter along with the ratio of $\sigma_V^2 / \sigma_H^2$ approaching to zero.

\subsection{Asymptotic Result for TPLR as $\Theta_D \to 0$}
Before presenting the asymptotic analysis for TPLR, we first give the following proposition.
\begin{prop}\label{Q_asym_prop}
The asymptotic expression for $Q_M(a,b)$, the generalized Marcum Q-function, as $b \to 0$, is given by
\begin{align}\label{Prop_31}
{Q_M}\left( {a,b} \right)\mathop  \simeq \limits^{b \to 0} 1 - \frac{{\exp \left( { - \frac{{{a^2}}}{2}} \right)}}{{\Gamma \left( {M + 1} \right){2^M}}}{b^{2M}}+o\left( {{b^{2M + 1}}} \right),
\end{align}
where $M$, $a$, $b$ are non-negative, $o(\cdot)$ and $\Gamma(\cdot)$ denote the higher order term and Gamma function \cite{Gradshteyn}, respectively.
\end{prop}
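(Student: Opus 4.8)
The plan is to work directly from the integral representation of the generalized Marcum Q-function and isolate its leading small-$b$ behaviour. Recall that $Q_M(a,b)$ admits the representation $Q_M(a,b) = 1 - \int_0^b g(x)\,\dd x$, where
\begin{align}
g(x) = \frac{x^M}{a^{M-1}} \exp\!\left(-\frac{x^2 + a^2}{2}\right) I_{M-1}(ax), \notag
\end{align}
and $I_{M-1}(\cdot)$ is the modified Bessel function of the first kind. Since $b \to 0$, the entire deviation of $Q_M(a,b)$ from $1$ is governed by the behaviour of $g$ near the origin, so the task reduces to extracting the lowest power of $x$ in $g(x)$ and integrating it from $0$ to $b$.

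First I would insert the Maclaurin series $I_{M-1}(ax) = \sum_{k \ge 0} (ax/2)^{2k+M-1} / [k!\,\Gamma(k+M)]$ together with the expansion $\exp(-x^2/2) = 1 - x^2/2 + \cdots$. After multiplying out, the powers of $a$ cancel in the lowest-order term, and one finds
\begin{align}
g(x) = \exp\!\left(-\frac{a^2}{2}\right)\left[\frac{x^{2M-1}}{2^{M-1}\,\Gamma(M)} + r(x)\right], \notag
\end{align}
where every surviving monomial in the remainder $r(x)$ carries a factor $x^{2M+1}$ or higher (both the $k \ge 1$ Bessel terms and the $-x^2/2$ Gaussian correction raise the degree by at least two). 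Integrating the leading term gives $b^{2M}/(2M)$, and using the identity $M\,\Gamma(M) = \Gamma(M+1)$ together with $2M \cdot 2^{M-1} = 2^M M$ reproduces exactly the claimed coefficient $\exp(-a^2/2)/[\Gamma(M+1)2^M]$ in front of $b^{2M}$.

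The main technical point, and the step I expect to be the real obstacle, is controlling the integrated remainder $\int_0^b r(x)\,\dd x$ and certifying that it is $o(b^{2M+1})$. Here I would argue that the Bessel series and the Gaussian factor both converge uniformly on the compact interval $[0,b]$ for small $b$, so the series may be integrated term by term; since the next nonzero contribution to $g$ is of order $x^{2M+1}$, its integral is $O(b^{2M+2})$, which is indeed $o(b^{2M+1})$. Collecting the leading term with this remainder estimate and recalling $Q_M(a,b) = 1 - \int_0^b g(x)\,\dd x$ then yields \eqref{Prop_31}; the non-negativity hypotheses on $M$, $a$, and $b$ ensure that the integral representation and all expansions used are valid throughout.
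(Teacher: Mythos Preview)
Your argument is correct and follows essentially the same route as the paper: write $Q_M(a,b)=1-\int_0^b g(x)\,\dd x$, replace $I_{M-1}(ax)$ by its leading series term $(ax/2)^{M-1}/\Gamma(M)$, and integrate the resulting $x^{2M-1}$ contribution. The only cosmetic difference is that the paper keeps the factor $\exp(-x^2/2)$ intact, evaluates the integral as a lower incomplete Gamma function, and then uses $\Upsilon(s,x)\sim x^s/s$, whereas you expand $\exp(-x^2/2)$ as well and integrate the bare power directly; both paths yield the same coefficient and remainder order.
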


\begin{proof}
See Appendix \ref{proof_marcumQ_prop}.
\end{proof}

We present some numerical results in Fig. \ref{Marcum} to validate the correctness of the derived asymptotic expression for the generalized Marcum Q-function. It is obvious that the asymptotic results match the exact results very well when $b$ is sufficiently small.
\begin{figure}[!htb]
\vspace{-0.5cm}
\setlength{\abovecaptionskip}{0pt}
\setlength{\belowcaptionskip}{10pt}
\centering
\includegraphics[width=3 in]{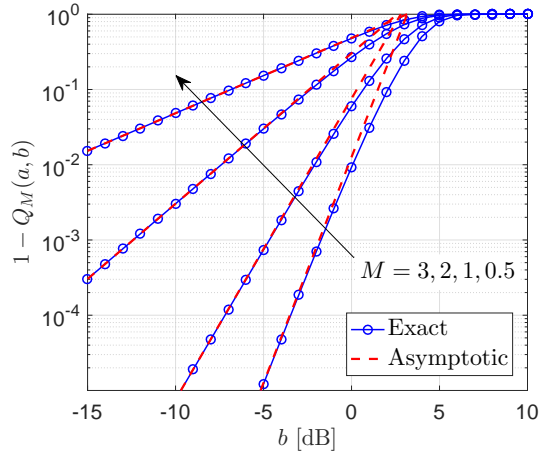}
\caption{$1-Q_M(a,b)$ versus $b$ for $a=1$.}\vspace{-0.5cm}
\label{Marcum}
\end{figure}

For $G_D \to +\infty$, we have
\begin{align}
{\Theta _D} = \frac{{ - 1}}{{{G_D}}}\ln \frac{{{\lambda _D}}}{{{K_1}{G_D}}} \to {0^ + }, \quad \text{as }{G_D} \to  + \infty.
\end{align}
In the following, we will analyze the asymptotic behaviour of TPLR when $\Theta_D \to 0^+$ (or equivalently $G_D \to +\infty$).

\begin{lemma}\label{TPLR_asym_lemma}
For $\Theta_D \to 0$ (or equivalently $G_D \to +\infty$), by using  Proposition \ref{Q_asym_prop}, the asymptotic expression for TPLR can be derived as
\begin{align}\label{TPLR_asym}
{\rm{TPLR}}\mathop  \simeq \limits^{{\Theta _D} \to 0} \frac{{\sqrt \pi  }}{{4\Gamma \left( {1.5} \right){\sigma _V}{\sigma _H}}}\exp \left( { - \frac{\lambda }{2} - \frac{{{\mu _V^2}}}{{2\sigma _V^2}}} \right){\Theta _D}.
\end{align}
\end{lemma}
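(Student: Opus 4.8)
The plan is to begin from the expectation representation of the TPLR over the Gaussian elevation error $\theta_V$,
\begin{align}
{\rm{TPLR}} = {\mathbb{E}_{{\theta _V}}}\left\{ {1 - {Q_{\frac{1}{2}}}\left( {\sqrt \lambda  ,\sqrt {\frac{{{\Theta _D} - \theta _V^2}}{{\sigma _H^2}}\mathbb{I}\left\{ {{\Theta _D} \ge \theta _V^2} \right\}} } \right)} \right\}, \notag
\end{align}
and to substitute the leading-order expansion supplied by Proposition \ref{Q_asym_prop}. The indicator forces the integrand to vanish outside $\theta_V^2 \le \Theta_D$, so that the only contributing values of $\theta_V$ lie in the shrinking window $[-\sqrt{\Theta_D},\sqrt{\Theta_D}]$; inside this window the second argument of $Q_{\frac12}$ is at most $\sqrt{\Theta_D}/\sigma_H$, which tends to $0^+$ as $\Theta_D \to 0^+$. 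Hence Proposition \ref{Q_asym_prop} with $M=\frac12$ and $a=\sqrt\lambda$ gives $1 - Q_{\frac12}(\sqrt\lambda, b) \simeq \frac{\exp(-\lambda/2)}{\Gamma(3/2)\sqrt 2}\, b$ uniformly for $b$ in this range, since $2M=1$, $\Gamma(M+1)=\Gamma(3/2)$ and $2^{M}=\sqrt2$.

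First I would carry out this substitution and reduce the expectation to the one-dimensional integral
\begin{align}
{\rm{TPLR}} \simeq \frac{{\exp \left( { - \lambda /2} \right)}}{{\Gamma \left( {1.5} \right)\sqrt 2\, {\sigma _H}}} \int_{ - \sqrt {{\Theta _D}} }^{\sqrt {{\Theta _D}} } {\sqrt {{\Theta _D} - \theta _V^2}\, {f_{{\theta _V}}}\left( {{\theta _V}} \right)} \,\dd{\theta _V}, \notag
\end{align}
with $f_{\theta_V}(\cdot)$ the $\mathcal{N}(\mu_V,\sigma_V^2)$ density. The second, and decisive, observation is that as $\Theta_D \to 0^+$ the interval of integration collapses to the single point $\theta_V = 0$, so the smooth density $f_{\theta_V}$ may be frozen at its value $f_{\theta_V}(0) = \frac{1}{\sqrt{2\pi}\sigma_V}\exp(-\mu_V^2/(2\sigma_V^2))$, the error being of higher order in $\Theta_D$. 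Pulling this constant out leaves the elementary integral $\int_{-\sqrt{\Theta_D}}^{\sqrt{\Theta_D}} \sqrt{\Theta_D - \theta_V^2}\,\dd\theta_V = \pi\Theta_D/2$, the area under a semicircle of radius $\sqrt{\Theta_D}$. Collecting the constants, the prefactor simplifies via $\frac{1}{\sqrt 2}\cdot\frac{1}{\sqrt{2\pi}}\cdot\frac{\pi}{2} = \frac{\sqrt\pi}{4}$ to exactly \eqref{TPLR_asym}.

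The main obstacle is the rigorous justification of the two localization steps, that is, interchanging the $\Theta_D \to 0^+$ asymptotics with the expectation. Concretely, one must confirm that the remainder $o(b^{2M+1})$ from Proposition \ref{Q_asym_prop}, after being integrated against the Gaussian over the window $[-\sqrt{\Theta_D},\sqrt{\Theta_D}]$, contributes a term of order $o(\Theta_D)$ and is thus negligible relative to the retained $\Theta_D$ term, and likewise that replacing $f_{\theta_V}(\theta_V)$ by $f_{\theta_V}(0)$ over the same window produces only an $o(\Theta_D)$ correction. This is a Laplace-type localization estimate, made tractable by the uniform bound $\sqrt{\Theta_D - \theta_V^2}\le \sqrt{\Theta_D}$ which keeps the integrand controlled; granting these standard dominated-convergence arguments, the remaining manipulations are the elementary integral and constant bookkeeping indicated above.
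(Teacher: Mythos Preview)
Your proposal is correct and follows essentially the same route as the paper's proof: substitute the $M=\tfrac12$ case of Proposition~\ref{Q_asym_prop} into the expectation, restrict to the shrinking window $[-\sqrt{\Theta_D},\sqrt{\Theta_D}]$, freeze the Gaussian density of $\theta_V$ at $0$, and evaluate the remaining semicircle integral $\int_{-\sqrt{\Theta_D}}^{\sqrt{\Theta_D}}\sqrt{\Theta_D-\theta_V^2}\,\dd\theta_V=\pi\Theta_D/2$. The only difference is presentational---the paper writes out the Gaussian exponential explicitly and Taylor-expands it at $\theta_V=0$, whereas you phrase this as freezing $f_{\theta_V}$ at the origin---and your explicit mention of the dominated-convergence/localization justification is a welcome addition that the paper leaves implicit.
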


\begin{proof}
See Appendix \ref{proof_TPLR_asym_lemma}.
\end{proof}

\begin{remark}
In the $G_D$ saturation case, a linear mapping from $\Theta_D$ to the TPLR  is derived, shown in Lemma \ref{TPLR_asym_lemma}, which is interesting in the performance analysis aspect. This asymptotic expression can not only simplify the TPLR calculation significantly, but also reveal the relationship between the TPLR and $\Theta_D$ in the $G_D$ saturation case.
\end{remark}

\begin{remark}
In fact, the asymptotic expression presented in Lemma \ref{TPLR_asym_lemma} can be also viewed as the asymptotic result for the outage probability over Beckmann fading channels (including Rayleigh, Hoyt and Rice fading channels), where $\Theta_D$ represents the received SNR threshold, and the instantaneous SNR at the receiver is $\gamma= \theta_V^2+\theta_H^2$.
\end{remark}

Although the atmospheric turbulence is not a main investigation in this paper (we will consider this issue in our future work),  we can simply analyze this impact  on the TPLR in the $G_D$ saturation case based on Lemma \ref{TPLR_asym_lemma}, shown in Corollary \ref{turbulence_coro} where the turbulence is modeled by a Gamma-Gamma distribution, a widely adopted turbulence model   \cite{Hessa_TWC}.

\begin{coro}\label{turbulence_coro}
In the $G_D$ saturation case, if the atmospheric turbulence modeled by a Gamma-Gamma distribution is considered over the FSO link,  the asymptotic result for the TPLR is 
\begin{align}\label{TPLR_Turbulence}
{\rm TPLR}=&\frac{{\sqrt \pi  }}{{4  \Gamma \left( {1.5} \right){\sigma _V}{\sigma _H}}}\exp \left( { - \frac{\lambda }{2} - \frac{{{\mu _V^2}}}{{2\sigma _V^2}}} \right) \Theta_D  \notag\\
&\hspace{4cm}+ \underbrace{\frac{{\sqrt \pi \left( \psi(\alpha_D)+\psi(\beta_D) -\ln (\alpha_D\beta_D)\right) }}{{4G_D \Gamma \left( {1.5} \right){\sigma _V}{\sigma _H}}}\exp \left( { - \frac{\lambda }{2} - \frac{{{\mu _V^2}}}{{2\sigma _V^2}}} \right)}_{\text{Atmospheric Turbulence}},    
\end{align}
where $\psi(\cdot)$ denotes the digamma function \cite{Gradshteyn},  $\alpha_D$ and $\beta_D$ are the fading parameters
of large-scale and small-scale fluctuations, respectively.
\end{coro}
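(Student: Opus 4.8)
The plan is to treat the atmospheric turbulence as a multiplicative fading gain on the received power and then average the turbulence-free asymptotic result of Lemma \ref{TPLR_asym_lemma} over this gain. Writing $h$ for the Gamma-Gamma distributed turbulence coefficient, normalized so that $\mathbb{E}\{h\}=1$, the received power at the legitimate receiver is scaled to $K_1 L(\theta) G_D h$, so that the defining event $K_1 G_D L(\theta) h \ge \lambda_D$ for the TPLR again reads $\theta^2 \le \Theta_D(h)$, now with a turbulence-perturbed threshold
\begin{align}
\Theta_D(h) = \frac{-1}{G_D}\ln\frac{\lambda_D}{K_1 G_D h} = \Theta_D + \frac{\ln h}{G_D}. \notag
\end{align}
Since the corollary concerns the $G_D$ saturation regime, I would first condition on $h$ and invoke Lemma \ref{TPLR_asym_lemma} with $\Theta_D$ replaced by $\Theta_D(h)$, yielding a conditional TPLR that is linear in $\Theta_D(h)$.

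Taking the expectation over $h$ is then immediate, because the prefactor in Lemma \ref{TPLR_asym_lemma} is independent of $h$ and $\Theta_D(h)$ is affine in $\ln h$; the averaging collapses to a single logarithmic moment $\mathbb{E}\{\ln h\}$:
\begin{align}
{\rm TPLR} = \mathbb{E}_h\{{\rm TPLR}(h)\} \simeq \frac{\sqrt{\pi}}{4\Gamma(1.5)\sigma_V\sigma_H}\exp\left(-\frac{\lambda}{2}-\frac{\mu_V^2}{2\sigma_V^2}\right)\left(\Theta_D + \frac{\mathbb{E}\{\ln h\}}{G_D}\right). \notag
\end{align}
Comparing with \eqref{TPLR_Turbulence}, the whole content of the corollary reduces to the logarithmic-moment identity $\mathbb{E}\{\ln h\} = \psi(\alpha_D)+\psi(\beta_D)-\ln(\alpha_D\beta_D)$.

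The main obstacle is therefore evaluating $\mathbb{E}\{\ln h\}$ for the Gamma-Gamma law. The cleanest route uses the product representation of a unit-mean Gamma-Gamma variate, $h=xy$, with independent $x$ and $y$ drawn from Gamma distributions of shapes $\alpha_D$, $\beta_D$ and scales $1/\alpha_D$, $1/\beta_D$ respectively. The standard logarithmic moment $\mathbb{E}\{\ln X\} = \psi(k) + \ln\theta$ of a Gamma variate of shape $k$ and scale $\theta$ then gives $\mathbb{E}\{\ln x\} = \psi(\alpha_D)-\ln\alpha_D$ and $\mathbb{E}\{\ln y\} = \psi(\beta_D)-\ln\beta_D$, whose sum is the claimed identity. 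Equivalently, one may differentiate the Mellin transform $\mathbb{E}\{h^s\} = \Gamma(\alpha_D+s)\Gamma(\beta_D+s)/[\Gamma(\alpha_D)\Gamma(\beta_D)(\alpha_D\beta_D)^s]$ at $s=0$. Substituting the identity into the averaged expression and expanding the bracket reproduces \eqref{TPLR_Turbulence}, with the second summand correctly identified as the turbulence contribution. The only point requiring care is the interchange of the Lemma \ref{TPLR_asym_lemma} asymptotic with the expectation over $h$: this is legitimate in the saturation regime because the perturbation $\ln h/G_D$ vanishes as $G_D\to\infty$, so the conditional linearization holds for almost every realization of $h$ and its contribution to the mean is captured at leading order by $\mathbb{E}\{\ln h\}/G_D$.
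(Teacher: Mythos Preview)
Your proposal is correct and follows essentially the same route as the paper: model the turbulence as a multiplicative gain (the paper calls it $I_D$), obtain the perturbed threshold $\Theta_D + G_D^{-1}\ln I_D$, apply Lemma~\ref{TPLR_asym_lemma} conditionally, and average to reduce everything to $\mathbb{E}\{\ln I_D\}$. The only minor difference is in evaluating that logarithmic moment: the paper integrates directly against the Gamma--Gamma density with the Bessel-$K$ kernel, whereas you exploit the product representation $h=xy$ of independent Gamma variates (or the Mellin transform), which is a cleaner and more transparent way to arrive at $\psi(\alpha_D)+\psi(\beta_D)-\ln(\alpha_D\beta_D)$.
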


\begin{proof}
It is obvious that the TPLR in \eqref{TPLR_def} becomes
\begin{align}
{\rm TPLR} &=\Pr \Big\{I_D K_1 G_D L(\theta) \ge \lambda_D\Big\}=
\Pr\left\{\theta^2 \le \frac{-1}{G_D} \ln \frac{\lambda_D}{I_D K_1 G_D}\right\} \notag\\
&=\Pr\left\{\theta^2 \le \Theta_D +\frac{1}{G_D} \ln I_D\right\},    
\end{align}
where $I_D$ represents the atmospheric turbulence following a Gamma-Gamma distribution with the PDF \cite[Eq. (11)]{Hessa_TWC},
\begin{align}
f_{I_D}(x)=\frac{2(\alpha_D \beta_D)^{\frac{\alpha_D+\beta_D}{2}}}{\Gamma(\alpha_D)\Gamma(\beta_D)}  x^{\frac{\alpha_D+\beta_D}{2}-1} K_{\alpha_D-\beta_D} \left(2 \sqrt{\alpha_D \beta_D x}\right),  
\end{align}
where $K_{\cdot}(\cdot)$ denotes the modified Bessel function of the second kind \cite{Gradshteyn}.

Define $\Theta_D^\prime = \Theta_D +\frac{1}{G_D} \ln I_D$. In view of Lemma \ref{TPLR_asym_lemma}, it is easy to derive the asymptotic result for the TPLR as
\begin{align}\label{TPLR_turbulence_int}
&{\rm{TPLR}} \mathop  \simeq \limits^{{\Theta _D} \to 0} \frac{{\sqrt \pi  }}{{4\Gamma \left( {1.5} \right){\sigma _V}{\sigma _H}}}\exp \left( { - \frac{\lambda }{2} - \frac{{{\mu _V^2}}}{{2\sigma _V^2}}} \right) \cdot\mathbb{E}_{I_D}\left\{\Theta _D^\prime\right\} \notag\\
&= \frac{{\sqrt \pi  }}{{4  \Gamma \left( {1.5} \right){\sigma _V}{\sigma _H}}}\exp \left( { - \frac{\lambda }{2} - \frac{{{\mu _V^2}}}{{2\sigma _V^2}}} \right) \Theta_D + \underbrace{\frac{{\sqrt \pi  }}{{4G_D \Gamma \left( {1.5} \right){\sigma _V}{\sigma _H}}}\exp \left( { - \frac{\lambda }{2} - \frac{{{\mu _V^2}}}{{2\sigma _V^2}}} \right) \cdot\mathbb{E}_{I_D}\left\{ \ln I_D\right\}}_{\text{Atmospheric Turbulence}}.
\end{align}
By using the PDF of $I_D$, the expectation of $\ln I_D$ with respect to $I_D$ can be obtained as \cite{Hessa_TWC}
\begin{align}\label{Ln_ID}
\mathbb{E} \left\{\ln I_D\right\} &=  \frac{2(\alpha_D \beta_D)^{\frac{\alpha_D+\beta_D}{2}}}{\Gamma(\alpha_D)\Gamma(\beta_D)} 
\times\int_0^\infty  \ln(x) \cdot x^{\frac{\alpha_D+\beta_D}{2}-1} K_{\alpha_D-\beta_D} \left(2 \sqrt{\alpha_D \beta_D x}\right) dx \notag\\
&= \psi(\alpha_D)+\psi(\beta_D) -\ln (\alpha_D\beta_D),
\end{align}
Combining \eqref{TPLR_turbulence_int} and \eqref{Ln_ID} yields Corollary \ref{turbulence_coro}.
\end{proof}

\begin{remark}
Considering $\psi(x)-\ln x \le 0$ for $x > 0$ in Corollary \ref{turbulence_coro}, we can conclude that the impact of atmospheric turbulence is always negative to TPLR in the $G_D$ saturation case. Further, the negative impact is quantified by the second part in \eqref{TPLR_Turbulence}, compared to the TPLR without atmospheric turbulence in Lemma \ref{TPLR_asym_lemma}.
\end{remark}

\subsection{Special Case for $\mu_V=\mu_H=0$ and $\sigma_V \neq \sigma_H$}
To get the exact closed-form expression for TPLR, we relax the conditions for the statistical characteristics of $\theta_V$ and $\theta_H$, i.e., $\mu_V=\mu_H=0$ and $\sigma_V \neq \sigma_H$. In this simplified case, $\theta^2$ follows the Hoyt distribution, and the corresponding TPLR is given by \cite{Juan_TIT}
\begin{align}
{\rm{TPLR}} = \frac{{2q}}{{1 + {q^2}}}Ie\left( {\frac{{1 - {q^2}}}{{1 + {q^2}}},\frac{{{{\left( {1 + {q^2}} \right)}^2}}}{{4{q^2}\left( {\sigma _V^2 + \sigma _H^2} \right)}}{\Theta _D}} \right),
\end{align}
where $Ie(\cdot,\cdot)$ denotes the Rice $Ie$-function defined  in \cite[Eq. (3)]{Juan_TIT}, and $q \in [0,1]$ is given by
\begin{align}
q = \begin{cases}
{{\sigma _V}/{\sigma _H},}&{\text{for }{\sigma _V} \le {\sigma _H};}\\
{{\sigma _H}/{\sigma _V},}&{\text{for }{\sigma _V} > {\sigma _H}.}
\end{cases}
\end{align}

From the asymptotic analysis for the general parameter settings, the asymptotic expression in the Hoyt distribution case ($\mu_V=\mu_H=0$) can be easily derived as
\begin{align}
{\rm{TPLR}}\mathop  \simeq \limits^{{\Theta _D} \to 0} \frac{{\sqrt \pi  }}{{4{\sigma _V}{\sigma _H}\Gamma \left( {1.5} \right)}}{\Theta _D}.
\end{align}
The asymptotic expression can be also easily derived by using the asymptotic result for the generalized Marcum Q-function in Proposition \ref{Q_asym_prop}, because the Rice $Ie$-function can be written in the Marcum Q-function form, given by \cite{Juan_TIT}
\begin{align}
&Ie\left( {k,x} \right) = \int_0^x {\exp \left( { - t} \right){I_0}\left( {kt} \right)dt}  \notag\\
&= \frac{1}{{\sqrt {1 - {k^2}} }}\left[ {{Q_1}\left( {\sqrt {\left( {1 + \sqrt {1 - {k^2}} } \right)x} ,\sqrt {\left( {1 - \sqrt {1 - {k^2}} } \right)x} } \right)} \right. \notag\\
&\hspace{5cm}\left. { - {Q_1}\left( {\sqrt {\left( {1 - \sqrt {1 - {k^2}} } \right)x} ,\sqrt {\left( {1 + \sqrt {1 - {k^2}} } \right)x} } \right)} \right].
\end{align}

When $q=1$, i.e., $\sigma_V=\sigma_H=\sigma$, the distribution of $\theta$ is reduced into the Rayleigh distribution, and the corresponding TPLR becomes
\begin{align}
{\rm{TPLR}} = 1 - \exp \left( { - \frac{{{\Theta _D}}}{{2{\sigma ^2}}}} \right).
\end{align}

\subsection{Special Case for $\mu_V \neq \mu_H \neq 0$ and $\sigma_V=\sigma_H=\sigma$}
For $\mu_V \neq \mu_H$ and $\sigma_V=\sigma_H=\sigma$, i.e., the Rice case, we can rewrite the TPLR as
\begin{align}
&{\rm{TPLR}} = \Pr \left\{ {\theta _V^2 + \theta _H^2 \le {\Theta _D}} \right\} \notag\\
&= \Pr \left\{ {{{\left( {\frac{{{\theta _V}}}{{\sqrt {\mu _V^2 + \mu _H^2} }}} \right)}^2} + {{\left( {\frac{{{\theta _H}}}{{\sqrt {\mu _V^2 + \mu _H^2} }}} \right)}^2} \le \frac{{{\Theta _D}}}{{\mu _V^2 + \mu _H^2}}} \right\} \notag\\
& = \Pr \left\{ {\sqrt {{{\left( {\frac{{{\theta _V}}}{{\sqrt {\mu _V^2 + \mu _H^2} }}} \right)}^2} + {{\left( {\frac{{{\theta _H}}}{{\sqrt {\mu _V^2 + \mu _H^2} }}} \right)}^2}}  \le \sqrt {\frac{{{\Theta _D}}}{{\mu _V^2 + \mu _H^2}}} } \right\}.
\end{align}
Let $\sin \beta  = \frac{{{\mu _V}}}{{\sqrt {\mu _V^2 + \mu _H^2} }}$ and $\cos \beta  = \frac{{{\mu _H}}}{{\sqrt {\mu _V^2 + \mu _H^2} }}$ It is easy to see that
\begin{align}
&\frac{{{\theta _V}}}{{\sqrt {\mu _V^2 + \mu _H^2} }} \sim \mathcal{N}\left( {\sin \beta ,\frac{{{\sigma ^2}}}{{\mu _V^2 + \mu _H^2}}} \right), \notag\\
&\frac{{{\theta _H}}}{{\sqrt {\mu _V^2 + \mu _H^2} }} \sim \mathcal{N}\left( {\cos \beta ,\frac{{{\sigma ^2}}}{{\mu _V^2 + \mu _H^2}}} \right). \notag
\end{align}
From the definition of the Rice distribution, we can conclude that
\begin{align}
\sqrt {{{\left( {\frac{{{\theta _V}}}{{\sqrt {\mu _V^2 + \mu _H^2} }}} \right)}^2} + {{\left( {\frac{{{\theta _H}}}{{\sqrt {\mu _V^2 + \mu _H^2} }}} \right)}^2}}
\sim {\rm Rice}\left( {1,\frac{\sigma }{{\sqrt {\mu _V^2 + \mu _H^2} }}} \right), \notag
\end{align}
and the corresponding TPLR can be easily derived by using the well-known standard cumulative distribution function (CDF) of the Rice distribution, given by
\begin{align}
{\rm TPLR} = 1 - {Q_1}\left( {\frac{{\sqrt {\mu _V^2 + \mu _H^2} }}{\sigma },\frac{{\sqrt {{\Theta _D}} }}{\sigma }} \right).
\end{align}
From the derivation in Proposition \ref{Q_asym_prop}, when $\Theta_D \to 0$, the asymptotic TPLR is
\begin{align}
{\rm TPLR} \mathop  \simeq \limits^{{\Theta _D} \to 0} \exp \left( { - \frac{{\mu _V^2 + \mu _H^2}}{{2{\sigma ^2}}}} \right)\frac{1}{{2{\sigma ^2}}}{\Theta _D} + o\left( {\Theta _D^2} \right).
\end{align}

\section{Transmission Probability Depending only  on Eavesdropper}\label{TPE_sec}
The transmission probability depending only on eavesdropper (TPE) is defined as the probability that the received power at the eavesdropper is less than a threshold ($\lambda_E$), i.e.,
\begin{align}\label{TPE_define}
{\rm{TPE}}&=\Pr \Big\{{K_2}{P_D}\left( \theta  \right)L\left( {{\theta _E}} \right){G_D} \le \lambda_E \Big\} 
= \Pr \Big\{ {{K_1}{K_2}G_D^2\exp \left( { - {G_D}\left( {2{\theta ^2} + 2\alpha {\theta _V} + {\alpha ^2}} \right)} \right) \le {\lambda _E}} \Big\} \notag\\
&= \Pr \Biggm\{ {{\theta ^2} + \alpha {\theta _V} \ge \underbrace {\frac{{ - 1}}{{2{G_D}}}\ln \frac{{{\lambda _E}}}{{{K_1}{K_2}G_D^2}} - \frac{{{\alpha ^2}}}{2}}_{{\Theta _E}}} \Biggm\}.
\end{align}
An approximation for the TPE will be given in Theorem \ref{TPE_theorem} based on the GHQ.

\begin{theorem}\label{TPE_theorem}
The TPE can be approximated as
\begin{align}
{\rm{TPE}} \approx \frac{1}{{\sqrt \pi  }}\sum\nolimits_{i = 1}^N {{\omega _i}{f_{{\rm{TPE}}}}\left( {{x_i}} \right)},
\end{align}
where $N$, $\omega_i$, and $x_i$ are the same as those in \eqref{TPLR_final}, and $f_{\rm TPE}(x)$ is given in \eqref{TPE_INT}, 
\begin{align}\label{TPE_INT}
f_{\rm TPE}(x) &={Q_{\frac{1}{2}}}\left( {\sqrt \lambda  ,\sqrt {\frac{{{\Theta _E} - {{\left( {\sqrt 2 {\sigma _V}x + {\mu _V}} \right)}^2} - \alpha \left( {\sqrt 2 {\sigma _V}x + {\mu _V}} \right)}}{{\sigma _H^2}}} } \right. \notag\\
&\hspace{4cm} \left. \times \mathbb{I}\left\{ {{\Theta _E} \ge {{\left( {\sqrt 2 {\sigma _V}x + {\mu _V}} \right)}^2} + \alpha \left( {\sqrt 2 {\sigma _V}x + {\mu _V}} \right)} \right\}\right).
\end{align}
\end{theorem}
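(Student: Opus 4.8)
The plan is to mirror the proof of Theorem~\ref{TPLR_theorem}: condition on $\theta_V$, collapse the inner probability over $\theta_H$ into a single generalized Marcum Q-function, and then discretize the remaining one-dimensional Gaussian average over $\theta_V$ with the Gauss--Hermite rule. The only structural difference from the TPLR case is the linear cross term $\alpha\theta_V$ appearing inside the event in \eqref{TPE_define}; I expect this to act merely as a shift of the inner threshold and not to alter the argument qualitatively.

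First I would freeze $\theta_V$ and rewrite the event $\{\theta_V^2 + \theta_H^2 + \alpha\theta_V \ge \Theta_E\}$ as $\{\theta_H^2 \ge \Theta_E - \theta_V^2 - \alpha\theta_V\}$. Writing the right-hand side as $c(\theta_V)$, the law of total probability gives
\[
{\rm TPE} = \mathbb{E}_{\theta_V}\Big\{\Pr\big\{\theta_H^2 \ge c(\theta_V)\,\big|\,\theta_V\big\}\Big\}.
\]
Next I would identify the inner conditional tail. Since $\theta_H \sim \mathcal{N}(\mu_H,\sigma_H^2)$, the normalized square $(\theta_H/\sigma_H)^2$ is noncentral chi-squared with one degree of freedom and noncentrality parameter $\lambda=\mu_H^2/\sigma_H^2$, whose complementary CDF is exactly $Q_{\frac{1}{2}}(\sqrt{\lambda},\cdot)$. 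Hence for $c(\theta_V)\ge 0$ the inner probability equals $Q_{\frac{1}{2}}\big(\sqrt{\lambda},\sqrt{c(\theta_V)}/\sigma_H\big)$, while for $c(\theta_V)<0$ the event is certain and the probability is $1$. These two regimes are unified by the indicator $\mathbb{I}\{\Theta_E \ge \theta_V^2 + \alpha\theta_V\}$: when it vanishes, the second argument of $Q_{\frac{1}{2}}$ is forced to $0$ and the identity $Q_{\frac{1}{2}}(\sqrt{\lambda},0)=1$ reproduces the certain-event case, exactly as in the TPLR derivation.

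Finally I would recast the remaining expectation over $\theta_V$ into the Gauss--Hermite canonical form. Inserting the $\mathcal{N}(\mu_V,\sigma_V^2)$ density and substituting $\theta_V=\sqrt{2}\sigma_V x+\mu_V$ maps the Gaussian weight $\exp\!\big(-(\theta_V-\mu_V)^2/(2\sigma_V^2)\big)$ to $e^{-x^2}$, turning the inner argument into precisely $f_{\rm TPE}(x)$ of \eqref{TPE_INT}; the prefactor collapses via $\frac{1}{\sqrt{2\pi}\,\sigma_V}\cdot\sqrt{2}\,\sigma_V=\frac{1}{\sqrt{\pi}}$, producing the normalization in the statement. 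Applying the GHQ rule $\int_{-\infty}^{\infty} e^{-x^2} g(x)\,\dd x \approx \sum_{i=1}^{N}\omega_i g(x_i)$ with $g=f_{\rm TPE}$ then yields the claimed approximation.

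I expect the main (and essentially only) subtlety to be the identification of the inner tail with $Q_{\frac{1}{2}}$ together with the correct handling of the negative-threshold regime through the indicator; however, this is identical to what is already established for Theorem~\ref{TPLR_theorem}, so here it reduces to checking that the additional $\alpha\theta_V$ term leaves the derivation intact, which it does because it only translates $c(\theta_V)$. The Gauss--Hermite step is standard, and the approximation error is controlled in the usual way by the number of quadrature nodes $N$.
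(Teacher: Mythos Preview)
Your proposal is correct and follows essentially the same approach as the paper: condition on $\theta_V$, express the inner tail probability via the noncentral chi-squared complementary CDF $Q_{\frac{1}{2}}(\sqrt{\lambda},\cdot)$ with the indicator absorbing the negative-threshold case, then change variables $x=(\theta_V-\mu_V)/(\sqrt{2}\sigma_V)$ and apply Gauss--Hermite quadrature. The paper's proof is slightly more terse but structurally identical, including the observation that the extra $\alpha\theta_V$ term merely shifts the inner threshold without altering the argument.
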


\begin{proof}
See Appendix \ref{proof_TPE_theorem}.
\end{proof}

From \cite{Pan_TVT}--\cite{Zhao_CL}, for $\sigma _V^2 \ll \sigma _H^2$, the TPE can be robustly approximated by
\begin{align}
{\rm{TPE}}
\mathop  \approx  \limits^{\left( a \right)} & \frac{2}{3}\phi_{\rm TPE} \left( {{\mu _V}} \right) + \frac{1}{6}\phi_{\rm TPE} \left( {{\mu _V} + \sqrt 3 {\sigma _V}} \right) 
+ \frac{1}{6}\phi_{\rm TPE} \left( {{\mu _V} - \sqrt 3 {\sigma _V}} \right),
\end{align}
where $(a)$ follows \cite[Eq. (4)]{Pan_TVT},  and $\phi_{\rm TPE}(\cdot)$ is given by
\begin{align}
&\phi_{\rm TPE} \left( x \right)
= {Q_{\frac{1}{2}}}\left( {\sqrt \lambda  ,\sqrt {\frac{{{\Theta _E} - {x^2} - \alpha x}}{{\sigma _H^2}}\mathbb{I}\left\{ {{\Theta _E} \ge {x^2} + \alpha x} \right\}} } \right).
\end{align}

\subsection{Analysis on $\Theta_E$ versus $G_D$}
$\Theta_E$ is given by
\begin{align}
{\Theta _E} = \frac{{ - 1}}{{2{G_D}}}\ln \frac{{{\lambda _E}}}{{{K_1}{K_2}G_D^2}} - \frac{{{\alpha ^2}}}{2}.
\end{align}
The first derivative of $\Theta_E$ with respect to $G_D$ is
\begin{align}
\frac{{\partial {\Theta _E}}}{{\partial {G_D}}} = \frac{{\ln \frac{{{\lambda _E}}}{{{K_1}{K_2}G_D^2}} + 2}}{{2{G_D^2}}}.
\end{align}
Let $\frac{{\partial {\Theta _E}}}{{\partial {G_D}}}=0$. The root for $G_D>0$ is
\begin{align}
G_D^* = e\sqrt {\frac{{{\lambda _E}}}{{{K_1}{K_2}}}}.
\end{align}
We can conclude that $\Theta_E$ is an increasing function over $G_D \in (0,G_D^*)$, and a decreasing function over $G_D \in (G_D^*, \infty)$. The maximum value of $\Theta_E$ is
\begin{align}
\Theta _E^* = {\left. {{\Theta _E}} \right|_{{G_D} = G_D^*}} = \frac{{\sqrt {{K_1}{K_2}} }}{{e\sqrt {{\lambda _E}} }} - \frac{{{\alpha ^2}}}{2}.
\end{align}
For $G_D \to \infty$, we have
\begin{align}
\mathop {\lim }\limits_{{G_D} \to \infty } \frac{{ - 1}}{{2{G_D}}}\ln \frac{{{\lambda _E}}}{{{K_1}{K_2}G_D^2}} - \frac{{{\alpha ^2}}}{2} =  - \frac{{{\alpha ^2}}}{2}.
\end{align}
For $G_D \to 0^+$, we have the limit
\begin{align}
\mathop {\lim }\limits_{{G_D} \to {0^ + }} \frac{{ - 1}}{{2{G_D}}}\ln \frac{{{\lambda _E}}}{{{K_1}{K_2}G_D^2}} - \frac{{{\alpha ^2}}}{2} \to  - \infty.
\end{align}

\begin{remark}\label{trend_ThetaE}
Combining the analysis on $\Theta_E$ and \eqref{TPE_define}, we can conclude that the TPE is decreasing over $G_D \in (0,G_D^*)$ due to the increase in $\Theta_E$, while the TPE is increasing over $G_D \in (G_D^*,\infty)$ due to the decrease in $\Theta_E$. When $G_D$ is sufficiently large, the TPE will increase to an upper bound, because $\Theta_E$ goes to a constant ($-\alpha^2/2$). If $G_D \to 0^+$, $\Theta_E \to -\infty$, resulting in the TPE going to 1.
\end{remark}

\subsection{Asymptotic Analysis for TPE}
\begin{lemma}\label{TPE_asym_lemma}
When $\Theta_E \to \frac{-\alpha^2}{4}$ from the right side in the real number axis, the asymptotic result for TPE is given by
\begin{align}\label{TPE_Asym}
{\rm{TPE}} & \mathop  \simeq \limits^{{\Theta _E} \to \frac{{ - {\alpha ^2}}}{4}}  1 - \frac{{\sqrt \pi  \left( {{\alpha ^2} + 4{\Theta _E}} \right)}}{{16{\sigma _H}{\sigma _V}\Gamma \left( {1.5} \right)}}
\exp \left( { - \frac{\lambda }{2} - \frac{{{{\left( {\alpha /2 + {\mu _V}} \right)}^2}}}{{2\sigma _V^2}}} \right) 
\times \mathbb{I} \left\{ {{\Theta _E} \ge \frac{{ - {\alpha ^2}}}{4}} \right\}.
\end{align}
For $\Theta_E \le \frac{-\alpha^2}{4}$, the TPE is exactly equal to 1.
\end{lemma}

\begin{proof}
See Appendix \ref{proof_TPE_asym_lemma}.
\end{proof}

\begin{remark}
$\alpha$ reflects the distance between the sender and the eavesdropper. Specifically, $\alpha \to 0$ means that the eavesdropper is close to the sender. Lemma \ref{TPE_asym_lemma} presents a quantitative relationship between $\Theta_E$ (or $G_D$) and $\alpha$. When $\Theta_E \le \frac{-\alpha^2}{4}$, the event of received power at the eavesdropper below the pre-set threshold ($\lambda_E$) happens with probability one. Further, Lemma \ref{TPE_asym_lemma} reveals the linear trend of TPE with respect to $\Theta_E$  when $\Theta_E \to \left(\frac{-\alpha^2}{4}\right)^+$.
\end{remark}

\section{transmission probability depending on both legitimate receiver and eavesdropper}\label{TPRE_sec}
The transmission probability depending on both the legitimate receiver and the eavesdropper (TPRE) is defined as
\begin{align}\label{TPRE_Define}
&{\rm TPRE}=\Pr \Big\{ {{P_D}\left( \theta  \right) \ge {\lambda _D},{P_E}\left( {{\theta _E}} \right) \le {\lambda _E}} \Big\}  
= \Pr \Big\{ {{K_1}{G_D}L\left( \theta  \right) \ge {\lambda _D},{K_1}{K_2}G_D^2L\left( \theta  \right)L\left( {{\theta _E}} \right) \le {\lambda _E}} \Big\},
\end{align}
which can be further written by using the definition of $\Theta_D$ and $\Theta_E$,
\begin{align}\label{TPRE_Prob}
{\rm{TPRE}} &= \Pr \Big\{ {\theta _H^2 \le {\Theta _D} - \theta _V^2,\theta _H^2 \ge {\Theta _E} - \theta _V^2 - \alpha {\theta _V}} \Big\}  
= \Pr \left\{ {\frac{{{\Theta _E} - \theta _V^2 - \alpha {\theta _V}}}{{\sigma _H^2}} \le \frac{{\theta _H^2}}{{\sigma _H^2}} \le \frac{{{\Theta _D} - \theta _V^2}}{{\sigma _H^2}}} \right\}.
\end{align}
By applying the GHQ method, an approximate result for the TPRE is derived in Theorem \ref{TPRE_theorem}.

\begin{theorem}\label{TPRE_theorem}
A closed-form expression for an approximate TPRE is 
\begin{align}
{\rm{TPRE}} \approx \frac{1}{{\sqrt \pi  }}\sum\nolimits_{i = 1}^N {{\omega _i}{f_{{\rm{TPRE}}}}\left( {{x_i}} \right)},
\end{align}
where $N$, $\omega_i$, $x_i$ are the same as those in \eqref{TPLR_final}, and $f_{\rm TPRE} (\cdot)$ is given in \eqref{f_TPRE}, 
\begin{align}\label{f_TPRE}
&{f_{{\rm{TPRE}}}}\left( x \right) = \mathbb{I}\left\{ {x \ge \frac{{{\Theta _E} - {\Theta _D}}}{{\alpha \sqrt 2 {\sigma _V}}} - \frac{{{\mu _V}}}{{\sqrt 2 {\sigma _V}}}} \right\} 
\left[ {{Q_{\frac{1}{2}}}\Biggm( {\sqrt \lambda  , \frac{1}{\sigma _H}\sqrt { {{{\Theta _E} - {{\left( {\sqrt 2 {\sigma _V}x + {\mu _V}} \right)}^2} - \alpha \left( {\sqrt 2 {\sigma _V}x + {\mu _V}} \right)}}} }  } \right.  \notag \\
&\hspace{7cm}\times \mathbb{I}\left\{ {{\Theta _E} \ge {{\left( {\sqrt 2 {\sigma _V}x + {\mu _V}} \right)}^2} + \alpha \left( {\sqrt 2 {\sigma _V}x + {\mu _V}} \right)} \right\} \Biggm) \notag\\
&\hspace{2cm}\left. { - {Q_{\frac{1}{2}}}\Biggm( {\sqrt \lambda  , \frac{1}{\sigma _H}\sqrt {\left({\Theta _D} - {{\left( {\sqrt 2 {\sigma _V}x + {\mu _V}} \right)}^2}\right)\mathbb{I}\left\{ {{\Theta _D} \ge {{\left( {\sqrt 2 {\sigma _V}x + {\mu _V}} \right)}^2}} \right\}} } \Biggm)} \right].
\end{align}
\end{theorem}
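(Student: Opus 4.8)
The plan is to follow the same conditioning-then-quadrature strategy already used for Theorems \ref{TPLR_theorem} and \ref{TPE_theorem}. Starting from \eqref{TPRE_Prob}, I would first condition on the elevation error $\theta_V$ and write the TPRE as the expectation over $\theta_V \sim \mathcal{N}(\mu_V,\sigma_V^2)$ of the inner two-sided probability in $\theta_H$,
\begin{align}
{\rm TPRE} = \mathbb{E}_{\theta_V}\left\{ \Pr\left\{ \left. \frac{\Theta_E - \theta_V^2 - \alpha\theta_V}{\sigma_H^2} \le \frac{\theta_H^2}{\sigma_H^2} \le \frac{\Theta_D - \theta_V^2}{\sigma_H^2} \,\right|\, \theta_V \right\} \right\}, \notag
\end{align}
so that the only randomness left in the inner probability is carried by $\theta_H$.

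Second, I would evaluate that inner conditional probability in closed form. Because $\theta_H \sim \mathcal{N}(\mu_H,\sigma_H^2)$, the normalized square $\theta_H^2/\sigma_H^2$ is a noncentral chi-squared variate with one degree of freedom and noncentrality parameter $\lambda=\mu_H^2/\sigma_H^2$, whose complementary CDF is exactly $\Pr\{\theta_H^2/\sigma_H^2 > y\} = Q_{\frac{1}{2}}(\sqrt\lambda,\sqrt y)$. Writing the two-sided event as a difference of one-sided events then yields, for a nonnegative upper bound $b=(\Theta_D-\theta_V^2)/\sigma_H^2$ and lower bound $a=(\Theta_E-\theta_V^2-\alpha\theta_V)/\sigma_H^2$,
\begin{align}
\Pr\left\{ a \le \theta_H^2/\sigma_H^2 \le b \right\} = Q_{\frac{1}{2}}\left(\sqrt\lambda,\sqrt a\right) - Q_{\frac{1}{2}}\left(\sqrt\lambda,\sqrt b\right), \notag
\end{align}
which already reproduces the two Marcum-$Q$ terms of \eqref{f_TPRE}, namely the $\Theta_E$ term minus the $\Theta_D$ term.

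Third, and this is the step that needs the most care, I would install the indicator functions so the expression stays valid for every realization of $\theta_V$. Whenever a radicand turns negative the corresponding one-sided bound is vacuous: if $\Theta_D < \theta_V^2$ the upper-bound event is impossible, and if $\Theta_E < \theta_V^2+\alpha\theta_V$ the lower-bound constraint holds automatically. In both cases the clean device is to push the indicator inside the square root and use $Q_{\frac{1}{2}}(\sqrt\lambda,0)=1$, which produces the two inner indicators of \eqref{f_TPRE}. The remaining subtlety is the empty-interval case: when $\Theta_E-\alpha\theta_V>\Theta_D$ the lower bound exceeds the upper bound, the difference of the two $Q_{\frac{1}{2}}$ terms turns spuriously negative, yet the true probability is zero. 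Solving $\Theta_E-\alpha\theta_V\le\Theta_D$ for $\theta_V$ under $\alpha>0$ and substituting $\theta_V=\sqrt2\,\sigma_V x+\mu_V$ gives exactly the outer indicator $\mathbb{I}\{x \ge (\Theta_E-\Theta_D)/(\alpha\sqrt2\,\sigma_V) - \mu_V/(\sqrt2\,\sigma_V)\}$, and I would run a short case check over the signs of $\Theta_D-\theta_V^2$ and $\Theta_E-\theta_V^2-\alpha\theta_V$ to confirm this factor annihilates precisely the unphysical configurations while leaving the genuine cases untouched.

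Finally, I would discharge the expectation over $\theta_V$ by the Gauss--Hermite quadrature. The substitution $\theta_V = \sqrt2\,\sigma_V x + \mu_V$ converts the Gaussian expectation into $\tfrac{1}{\sqrt\pi}\int_{-\infty}^{\infty} g(\sqrt2\,\sigma_V x+\mu_V)\,e^{-x^2}\,\dd x$, which the $N$-point GHQ approximates by $\tfrac{1}{\sqrt\pi}\sum_{i=1}^N \omega_i\, g(\sqrt2\,\sigma_V x_i+\mu_V)$; identifying $g$ with the conditional probability assembled above makes the summand equal to $f_{\rm TPRE}(x_i)$ and closes the argument. I expect the main obstacle to be the bookkeeping of the three indicators in the third step, specifically verifying that the two inner radicand indicators and the outer nonemptiness indicator jointly cover every sign configuration without double-counting or leaving a gap.
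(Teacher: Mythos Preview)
Your proposal is correct and follows essentially the same approach as the paper's proof in Appendix~\ref{proof_TPRE_theorem}: condition on $\theta_V$, express the inner two-sided probability via the noncentral chi-squared CDF (hence the difference of two $Q_{1/2}$ terms with the indicator functions handling negative radicands and the empty-interval case), then perform the substitution $\theta_V=\sqrt{2}\,\sigma_V x+\mu_V$ and apply Gauss--Hermite quadrature. Your treatment of the three indicators is in fact more explicit than the paper's, which simply records the outer indicator $\mathbb{I}\{\theta_V\ge(\Theta_E-\Theta_D)/\alpha\}$ without the case-by-case verification you outline.
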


\begin{proof}
See Appendix \ref{proof_TPRE_theorem}.
\end{proof}

For $\sigma_V^2 \ll \sigma_H^2$, according to \cite{Pan_TVT}--\cite{Zhao_CL}, the TPRE in \eqref{TPRE_exp} can be robustly approximated by
\begin{align}
{\rm{TPRE}}\mathop  \approx \limits^{\sigma _V^2 \ll \sigma _H^2} & \frac{2}{3}{\phi _{{\rm{TPRE}}}}\left( {{\mu _V}} \right) + \frac{1}{6}{\phi _{{\rm{TPRE}}}}\left( {{\mu _V} + \sqrt 3 {\sigma _V}} \right) 
+ \frac{1}{6}{\phi _{{\rm{TPRE}}}}\left( {{\mu _V} - \sqrt 3 {\sigma _V}} \right),
\end{align}
where ${\phi _{{\rm{TPRE}}}}\left( x \right)$ is given in  \eqref{phi_TPRE}, 
\begin{align}\label{phi_TPRE}
{\phi _{{\rm{TPRE}}}}\left( x \right) =& \mathbb{I}\left\{ {x \ge \frac{{{\Theta _E}}}{{2\alpha }} - \frac{{{\Theta _D}}}{\alpha }} \right\} 
 \left[ {{Q_{\frac{1}{2}}}\left( {\sqrt \lambda  ,\sqrt {\left( {\frac{{{\Theta _E}}}{{2\sigma _H^2}} - \frac{{{x^2} - \alpha x}}{{\sigma _H^2}}} \right)\mathbb{I}\left\{ {{\Theta _E} \ge 2\left( {{x^2} + \alpha x} \right)} \right\}} } \right)} \right. \notag\\
&\hspace{5cm}\left. { - {Q_{\frac{1}{2}}}\left( {\sqrt \lambda  ,\sqrt {\frac{{{\Theta _D} - {x^2}}}{{\sigma _H^2}}\mathbb{I}\left\{ {{\Theta _D} \ge {x^2}} \right\}} } \right)} \right]. 
\end{align}

\subsection{Asymptotic Analysis}
\begin{lemma}\label{TPRE_asym_lemma}
The asymptotic expression for TPRE as $G_D \to +\infty$ is the same as that for TPLR, given by
\begin{align}\label{TPRE_Asym}
{\rm TPRE}\mathop  \simeq \limits^{{G_D} \to  + \infty } \frac{{\sqrt \pi  {\Theta _D}}}{{4{\sigma _H}{\sigma _V}\Gamma \left( {1.5} \right)}}\exp \left( { - \frac{\lambda }{2} - \frac{{\mu _V^2}}{{2\sigma _V^2}}} \right).
\end{align}
\end{lemma}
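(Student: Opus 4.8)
The plan is to take the limit $G_D \to +\infty$ directly in the GHQ representation of the TPRE from Theorem \ref{TPRE_theorem}, and show that the eavesdropper-related Marcum $Q$-term inside $f_{\rm TPRE}$ vanishes in this regime, leaving exactly the TPLR structure. The key observation driving everything is the asymptotic behaviour of the two thresholds as $G_D \to +\infty$: we have already established in \eqref{TPLR_def} and the surrounding discussion that $\Theta_D \to 0^+$, while the analysis preceding Remark \ref{trend_ThetaE} shows that $\Theta_E \to -\alpha^2/2$, a strictly negative constant. This separation of the two thresholds is what makes the eavesdropper condition become trivially satisfied in the limit.

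First I would examine the indicator $\mathbb{I}\{\Theta_E \ge (\sqrt 2 \sigma_V x + \mu_V)^2 + \alpha(\sqrt 2 \sigma_V x + \mu_V)\}$ and the associated first Marcum term in \eqref{f_TPRE}. Since $\Theta_E \to -\alpha^2/2$ and the right-hand side $(\sqrt 2 \sigma_V x + \mu_V)^2 + \alpha(\sqrt 2 \sigma_V x + \mu_V)$ has minimum value $-\alpha^2/4 > -\alpha^2/2$ over $x$, the inequality eventually fails for every fixed quadrature point $x_i$ as $G_D$ grows. Hence the indicator is zero, forcing the argument of the first $Q_{\frac{1}{2}}(\cdot,\cdot)$ to $Q_{\frac{1}{2}}(\sqrt\lambda, 0) = 1$. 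By the same token the leading indicator $\mathbb{I}\{x \ge (\Theta_E - \Theta_D)/(\alpha\sqrt 2\sigma_V) - \mu_V/(\sqrt 2 \sigma_V)\}$ must be checked: since $\Theta_E - \Theta_D \to -\alpha^2/2$, this lower bound on $x$ tends to a finite constant, so for the relevant quadrature points the outer indicator evaluates to $1$ and does not suppress the surviving term. Consequently $f_{\rm TPRE}(x_i) \to 1 - Q_{\frac{1}{2}}\big(\sqrt\lambda, \tfrac{1}{\sigma_H}\sqrt{(\Theta_D - (\sqrt 2\sigma_V x_i + \mu_V)^2)\,\mathbb{I}\{\cdots\}}\big)$, which is precisely $f_{\rm TPLR}(x_i)$ from Theorem \ref{TPLR_theorem}.

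Having reduced the TPRE to the TPLR in the limit, I would then simply invoke Lemma \ref{TPLR_asym_lemma}, whose asymptotic expression \eqref{TPLR_asym} as $\Theta_D \to 0$ (equivalently $G_D \to +\infty$) gives exactly the claimed right-hand side $\frac{\sqrt\pi\,\Theta_D}{4\sigma_H\sigma_V\Gamma(1.5)}\exp(-\lambda/2 - \mu_V^2/(2\sigma_V^2))$. This closes the argument: the two asymptotic expressions coincide because the eavesdropper constraint becomes inactive once $G_D$ is large enough that the received backflash power at Eve is guaranteed to fall below $\lambda_E$ regardless of the pointing realization.

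The main obstacle I anticipate is making the interchange of limit and finite GHQ summation rigorous, and more delicately, controlling the region where the outer indicator might exclude quadrature points near the boundary $x \approx (\Theta_E - \Theta_D)/(\alpha\sqrt 2\sigma_V) - \mu_V/(\sqrt 2\sigma_V)$. Because the GHQ sum is finite, the limit passes through the summation termwise without any dominated-convergence subtlety, so the real care is in verifying that for each \emph{fixed} $x_i$ the eavesdropper indicator is eventually $0$ and the outer indicator is eventually $1$; this hinges on the strict inequality $-\alpha^2/2 < -\alpha^2/4$, i.e. on $\alpha \neq 0$. I would flag the degenerate case $\alpha = 0$ separately, noting that it collapses the eavesdropper geometry and should be handled by the same threshold comparison taken with $\Theta_E \to 0^-$.
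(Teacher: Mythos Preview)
Your core observation matches the paper's: as $G_D \to +\infty$ one has $\Theta_E \to -\alpha^2/2$, and since the quadratic $\theta_V^2+\alpha\theta_V$ has minimum value $-\alpha^2/4 > -\alpha^2/2$, the eavesdropper constraint becomes vacuous for every realization of $\theta_V$. The paper (whose formal proof is omitted, but whose explanatory paragraph immediately after the lemma sketches the argument) makes this reduction directly at the probability level: once $\Theta_E < -\alpha^2/4$, the lower bound in \eqref{TPRE_Prob} is strictly negative for all $\theta_V$ and hence automatically dominated by $\theta_H^2 \ge 0$, so TPRE collapses \emph{exactly} to $\Pr\{\theta_H^2/\sigma_H^2 \le (\Theta_D-\theta_V^2)/\sigma_H^2\}$, which is the TPLR, and Lemma~\ref{TPLR_asym_lemma} finishes.

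Your detour through the GHQ representation of Theorem~\ref{TPRE_theorem} reaches the same destination but is less clean and introduces a soft spot. Theorems~\ref{TPLR_theorem} and~\ref{TPRE_theorem} are finite-node quadrature \emph{approximations}, not identities, so establishing $f_{\rm TPRE}(x_i)\to f_{\rm TPLR}(x_i)$ at each node only shows that one approximation converges to another; you then invoke Lemma~\ref{TPLR_asym_lemma}, which concerns the exact TPLR, creating a mismatch between the object you manipulated and the object whose asymptotic you quote. Arguing at the probability level as the paper does removes this gap and simultaneously dispenses with the outer-indicator bookkeeping you flag (when the lower bound is negative the indicator is redundant). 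Your caveat about $\alpha=0$ is apt and is tacitly assumed away throughout the paper.
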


\begin{proof}
The proof by solid mathematical manipulations is removed due to the page limitation.
\end{proof}

Lemma \ref{TPRE_asym_lemma} can be explained by the fact that for $G_D \to +\infty$, $\Theta_E \to \frac{-\alpha^2}{2}$, and $\Theta_D \to 0^+$. From the previous analysis, we know that for $\Theta_E < \frac{-\alpha^2}{4}$, there is no real root for $\Theta_E -\theta_V^2-\alpha \theta_V=0$, i.e., $\Theta_E-\theta_V^2-\alpha \theta_V$ is always less than zero for any $\theta_V$. From the second equal sign in   \eqref{TPRE_Prob}, for $\Theta_E < \frac{-\alpha^2}{4}$, the TPRE becomes
$
{\rm{TPRE}} = \Pr \left\{ {\frac{{\theta _H^2}}{{\sigma _H^2}} \le \frac{{{\Theta _D} - \theta _V^2}}{{\sigma _H^2}}} \right\},
$
which is exactly the definition of TPLR.

\begin{remark}
The asymptotic result for the TPRE in Lemma \ref{TPRE_asym_lemma}  reveals that although the increase in $G_D$ also induces an increase in the received power at the eavesdropper, the positive impact of increasing $G_D$ on the TPRE will domain the overall performance when $G_D \to \infty$, and more exactly, there is a convergence trend of TPLR and TPRE (i.e., the eavesdropping impact vanishes) in the $G_D$ saturation case.
\end{remark}

\subsection{Special Case for $\mu_V=\mu_H=0$ and $\sigma_V=\sigma_H=\sigma$}
As the TPRE cannot be solved in an exact closed-form, even for the simplest case, i.e., Rayleigh distribution. Here, we  only analyze the special case for the TPRE in the Rayleigh case ($\mu_V=\mu_H=0$ and $\sigma_V=\sigma_H=\sigma$).

Let $X = \theta _V^2 + \theta _H^2$ and $Y={2\theta _V^2 + 2\theta _H^2 + 2\alpha {\theta _V}}=2X+2 \alpha \theta_V$. The TPRE can be rewritten as
\begin{align}
{\rm{TPRE}} &= \Pr \Big\{ {{K_1}{G_D}\exp \left( { - {G_D}X} \right) \ge {\lambda _D},} \quad {{K_1}{K_2}G_D^2\exp \left( { - {G_D}\left( {Y + {\alpha ^2}} \right)} \right) \le {\lambda _E}} \Big\} \notag\\
& = \Pr \left\{ {X \le \frac{{ - 1}}{{{G_D}}}\ln \frac{{{\lambda _D}}}{{{K_1}{G_D}}}, \quad Y \ge \frac{{ - 1}}{{{G_D}}}\ln \frac{{{\lambda _E}\exp \left( {{G_D}{\alpha ^2}} \right)}}{{{K_1}{K_2}G_D^2}}} \right\}.
\end{align}
Let
\begin{align}
{\Theta _D} = \frac{{ - 1}}{{{G_D}}}\ln \frac{{{\lambda _D}}}{{{K_1}{G_D}}}, \notag \quad
{\Theta _E} = \frac{{ - 1}}{{{G_D}}}\ln \frac{{{\lambda _E}\exp \left( {{G_D}{\alpha ^2}} \right)}}{{{K_1}{K_2}G_D^2}}.\notag
\end{align}
The TPRE is further written as
\begin{align}
\Pr \left\{ {X \le {\Theta _D},Y \ge {\Theta _E}} \right\} = \int_0^{{\Theta _D}} {\int_{{\Theta _E}}^\infty  {{f_{X,Y}}\left( {x,y} \right)} } \dd y \dd x,
\end{align}
where $f_{X,Y}(\cdot,\cdot)$ represents the joint PDF of $X$ and $Y$. From the derivation of $f_{X,Y}(\cdot,\cdot)$ in Appendix \ref{PDF_appendix}, we can write the TPRE in an integral form as
\begin{align}\label{TPRE_Ray_int}
{\rm{TPRE}} =& \frac{1}{{4\pi {\sigma ^2}\alpha }}\int\nolimits_0^{{\Theta _D}} {\exp \left( { - \frac{x}{{2{\sigma ^2}}}} \right)}  
\int\nolimits_{\max \left\{ {{\Theta _E}, - 2\alpha \sqrt x  + 2x} \right\}}^{2x + 2\alpha \sqrt x } {{{\left[ {x - {{\left( {\frac{{y - 2x}}{{2\alpha }}} \right)}^2}} \right]}^{ - \frac{1}{2}}}} \dd y \dd x.
\end{align}
By considering the derivation in Appendix \ref{double_int_appendix}, the  TPRE can be easily approximated by
\begin{align}\label{TPRE_Ray}
{\rm{TPRE}} \approx \frac{{{\Theta _D}}}{{8\pi \alpha {\sigma ^2}}}\sum\nolimits_{i = 0}^N {{\omega _i^\prime}} {f^\prime_{{\rm{TPRE}}}}\left( {\frac{{{\Theta _D}}}{2}{x_i^\prime} + \frac{{{\Theta _D}}}{2}} \right),
\end{align}
where $\omega_i^\prime$ and $x_i^\prime$ are the weights and selected points over the standard integral interval (i.e., [-1,1]) in the Gaussian quadrature respectively, and ${f^\prime_{{\rm{TPRE}}}}\left( x \right)$ is given by
\begin{align}
&{f^\prime_{{\rm{TPRE}}}}\left( x \right) = \exp \left( { - \frac{x}{{2{\sigma ^2}}}} \right) 
\left[ {\alpha \pi  - 2\alpha \arcsin \left( {\frac{{\max \left\{ {{\Theta _E}, - 2\alpha \sqrt x  + 2x} \right\} - 2x}}{{2\alpha \sqrt x }}} \right)} \right].
\end{align}

\begin{remark}
There is no special function involved in \eqref{TPRE_Ray}, resulting in a much faster calculation for TPRE. Moreover, the built-in function for calculating the non-integer order generalized Marcum Q-function is not available in some softwares, such as Matlab, which brings more difficulty to implement Theorem \ref{TPRE_theorem}.
\end{remark}

\subsection{Further Simplification for the Rayleigh Case}
If we let ${\max \left\{ {{\Theta _E}, - 2\alpha \sqrt x  + 2x} \right\}}$ be always $- 2\alpha \sqrt x  + 2x$. The integral form of the TPRE in \eqref{TPRE_Ray_int} becomes
\begin{align}
{\rm{TPRE}} =& \frac{1}{{4\pi {\sigma ^2}\alpha }}\int_0^{{\Theta _D}} {\exp \left( { - \frac{x}{{2{\sigma ^2}}}} \right)}  \int_{ - 2\alpha \sqrt x  + 2x}^{2x + 2\alpha \sqrt x } {{{\left[ {x - {{\left( {\frac{{y - 2x}}{{2\alpha }}} \right)}^2}} \right]}^{ - \frac{1}{2}}}} \dd y \dd x.
\end{align}
By using the following integral identity,
\begin{align}
\int_{ - 2\alpha \sqrt x  + 2x}^{2x + 2\alpha \sqrt x } {{{\left[ {x - {{\left( {\frac{{y - 2x}}{{2\alpha }}} \right)}^2}} \right]}^{ - \frac{1}{2}}}} dy = 2\alpha \pi,
\end{align}
the closed-form expression for the TPRE can be finally derived as
\begin{align}\label{simple}
{\rm{TPRE}} = 1 - \exp \left( { - \frac{{{\Theta _D}}}{{2{\sigma ^2}}}} \right).
\end{align}

If $G_D \to +\infty$, we have
$
{\Theta _D} = \frac{1}{{{G_D}}}\ln \frac{{{K_1}{G_D}}}{{{\lambda _D}}} \to 0^+.
$
In this case ($\Theta_D \to 0^+$), by using $\exp \left( { - x} \right) \simeq 1 - x$ for $x \to 0$, the asymptotic result for TPRE is
\begin{align}
{\rm TPRE} \simeq \frac{{{\Theta _D}}}{{2{\sigma ^2}}} = \frac{1}{{2{\sigma ^2}{G_D}}}\ln \frac{{{K_1}{G_D}}}{{{\lambda _D}}}.
\end{align}
From the previous analysis, we know that the condition for using this very simple expression \eqref{simple} is $\Theta_E < \frac{-\alpha^2}{4}$. In this condition, the impact on the TPRE from the eavesdropper vanishes.

\section{Numerical Results}\label{numerical_sec}
\subsection{TPLR Simulations}
In this subsection, we run some Monte-Carlo simulations to validate the derived closed-form expressions for the TPLR. By referring to the parameter settings in \cite{Kupferman,Zhao_CL_QKD}, some selected simulation results are shown in Figs. \ref{TPLR_SigmaH}--\ref{TPLR_Robust}, where $10^7$ realizations are generated to get each average result according to the statistical  properties, i.e., $\mu_V$, $\mu_H$, $\sigma_V$ and $\sigma_H$.

In Fig. \ref{TPLR_SigmaH}, the growing trend of TPLR with decreasing $\sigma_H^2$ is obvious due to more weaker vibrations. As expected, the TPLR is decreasing as $\Theta_D$ approaches to zero (or equivalently $G_D \to +\infty$).
\begin{figure}[!htb]
\setlength{\abovecaptionskip}{0pt}
\setlength{\belowcaptionskip}{10pt}
\centering
\includegraphics[width= 3 in]{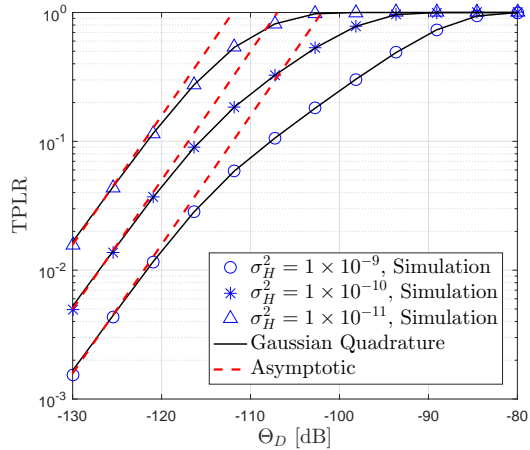}
\caption{TPLR versus $\Theta_D$ for $\mu_V=1 \times 10^{-8}$, $\mu_H=5 \times 10^{-8}$, $\sigma_V^2=1 \times 10^{-12}$ and $N=300$ for Gaussian quadrature.}
\label{TPLR_SigmaH}
\end{figure}

Figs. \ref{TPLR_Hoyt}--\ref{TPLR_Rice} plot the TPLR versus $\Theta_D$ for different variance cases in the Hoyt and Rice models respectively, where the trends with respect to $\Theta_D$ (or variance) are the same as those in Fig. \ref{TPLR_SigmaH}. The proposed asymptotic results  match the exact results very well especially in the low $\Theta_D$ region in Figs. \ref{TPLR_SigmaH}--\ref{TPLR_Rice}. It is worth noting that the asymptotic results are calculated much faster than the results by the Gaussian quadrature rule, which provides an alternative and efficient method, especially when we need a very high accuracy (several thousand terms may be needed in Gaussian quadrature method).

\begin{figure}[!htb]
\vspace{-0.5cm}
\setlength{\abovecaptionskip}{0pt}
\setlength{\belowcaptionskip}{10pt}
\centering
\includegraphics[width= 3 in]{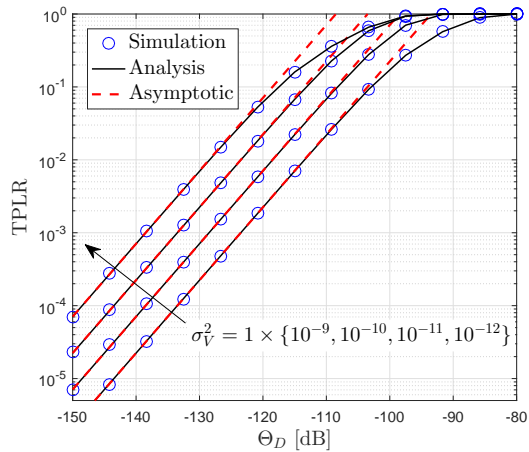}
\caption{TPLR in the Hoyt case versus $\Theta_D$ for $\mu_V=\mu_H=0$, and $\sigma_H^2= 5 \times 10^{-11}$.}
\label{TPLR_Hoyt}
\end{figure}

\begin{figure}[!htb]
\vspace{-0.5cm}
\setlength{\abovecaptionskip}{0pt}
\setlength{\belowcaptionskip}{10pt}
\centering
\includegraphics[width= 3 in]{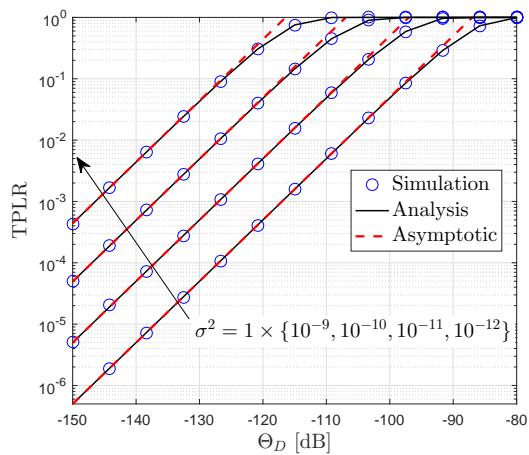}
\caption{TPLR in the Rice case versus $\Theta_D$ for $\sigma_V=\sigma_H=\sigma$, $\mu_V=1 \times 10^{-7}$ and $\mu_H=5 \times 10^{-7}$.}
\label{TPLR_Rice}
\end{figure}

The results shown in Fig. \ref{TPLR_Robust} validate the high accuracy of the robust approximation proposed in \cite{Pan_TVT}--\cite{Zhao_CL}, especially for $\sigma_V^2 \ll \sigma_H^2$. We can easily see that the gap between the exact and robust results almost vanishes as the ratio of $\sigma_V^2 / \sigma_H^2$ approaches to zero.

\begin{figure}[!htb]
\setlength{\abovecaptionskip}{0pt}
\setlength{\belowcaptionskip}{10pt}
\centering
\includegraphics[width= 3 in]{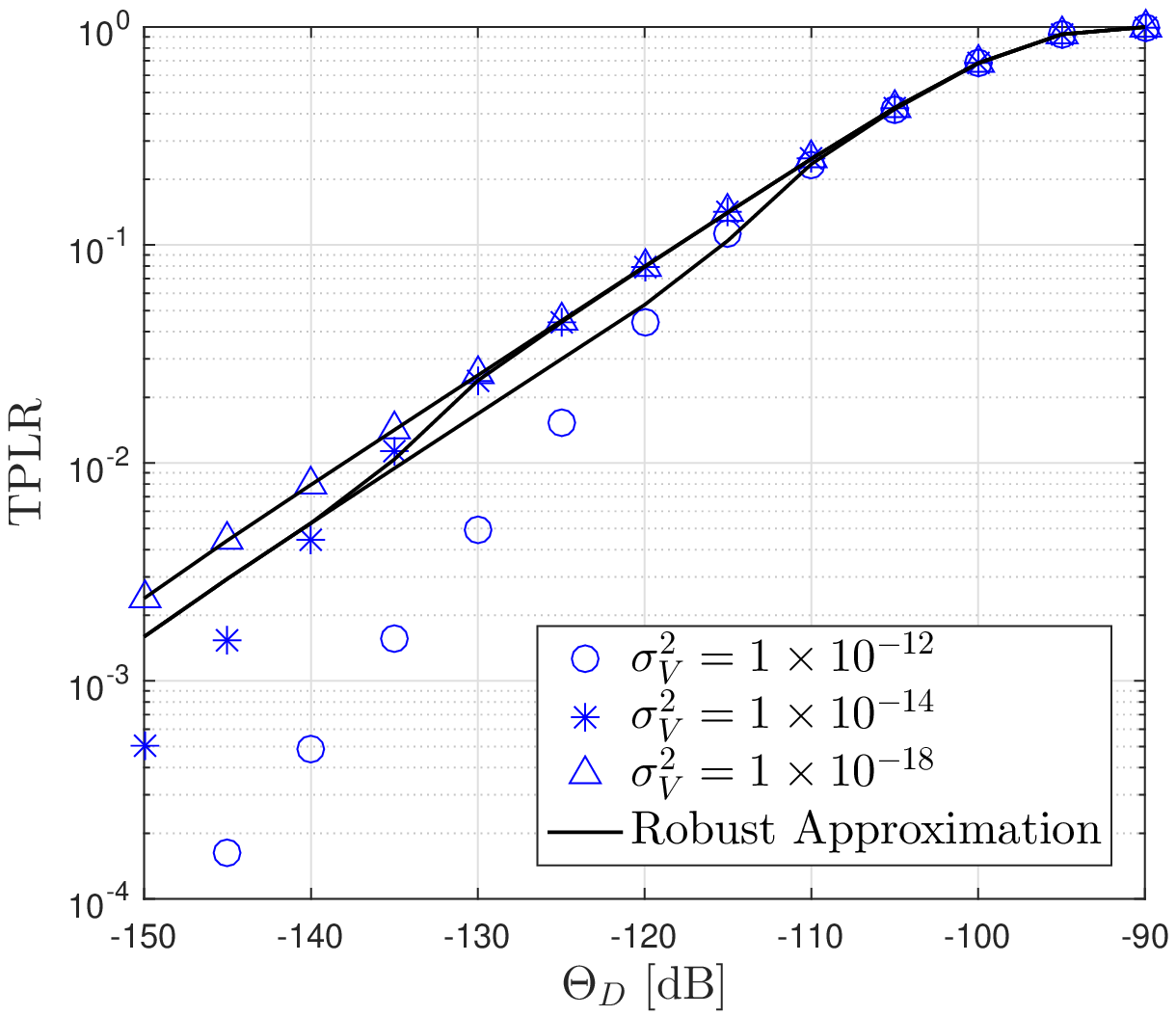}
\caption{TPLR versus $\Theta_D$ for $\sigma_H^2=1 \times 10^{-10}$, $\mu_V=1 \times 10^{-8}$ and $\mu_H=5 \times 10^{-8}$.}
\label{TPLR_Robust}
\end{figure}

\subsection{TPE Simulations}
In this subsection, we run the Monte-Carlo simulation to validate the correctness of the derived closed-form expressions for the TPE. To facilitate the simulation setting, we assume that $P_S=0$ dB, $G_S=G_E=10^{9}$, $\eta_S=\eta_D=\eta_E=0.9$, $\eta_q=0.1$, $\eta_B=0.04$, $\lambda_1=\lambda_2=780$ nm, $Z_1=Z_2=900$ km, and $L_A(D_1)=L_A(D_2)=0.5$, by considering Table I in \cite{Kupferman}. As the robust approximation for the Gaussian distribution has been investigated very well in \cite{Pan_TVT}--\cite{Zhao_CL}, we do not validate the high accuracy for $\sigma_V^2 \ll \sigma_H^2$ in the following simulations.

As shown in Fig. \ref{TPE_GD}, the TPE remains 1 before a sharp decrease up to the lowest bound, and after this lowest bound, the TPE grows rapidly to 1 as $G_D$ increases, which is exactly as the TPE changing analysis in Remark \ref{trend_ThetaE}. To show the lowest bound, Fig. \ref{TPE_GD_Log} uses the log-scale to plot TPE in Fig. \ref{TPE_GD}, where this lowest bound grows with increasing $\mu_H$. From Figs. \ref{TPE_GD}--\ref{TPE_GD_Log}, it is obvious that a large $\mu_H$ results in a large TPE, which can be explained by the fact that the received power performance at the legitimate receiver becomes worse, thereby accordingly decreasing the received power at the eavesdropper.

Fig. \ref{TPE_alpha} plots the TPE versus $\Theta_E$ from $-\frac{\alpha^2}{4}$ to $15 \alpha^2$. As analyzed in the IV-B subsection, for $\Theta_E \to -\frac{\alpha^2}{4}$ from the right side in the real number axis, the TPE can be approximated by a linear function. Moreover, there is a decreasing trend in the TPE with respect to $\alpha$.

\begin{figure}[!h]
\setlength{\abovecaptionskip}{0pt}
\setlength{\belowcaptionskip}{10pt}
\centering
\includegraphics[width= 3 in]{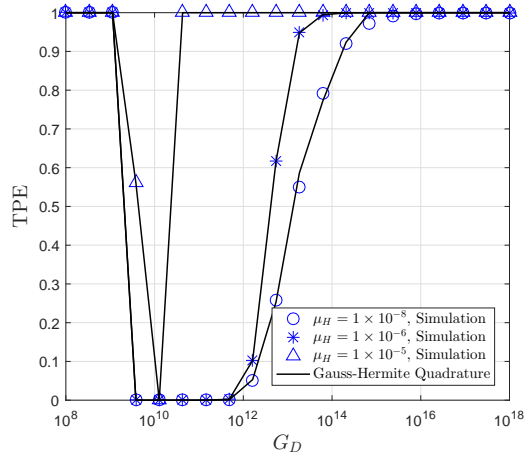}
\caption{TPE versus $G_D$ for $\mu_V=1 \times 10^{-7}$, $\sigma_V^2=1 \times 10^{-12}$, $\sigma_H^2=1 \times 10^{-13}$, $\lambda_E=1 \times 10^{-20}$, $\alpha=1 \times 10^{-9}$, and $N=30$ for the Gaussian quadrature.}
\label{TPE_GD}
\end{figure}

\begin{figure}[!h]
\setlength{\abovecaptionskip}{0pt}
\setlength{\belowcaptionskip}{10pt}
\centering
\includegraphics[width= 3 in]{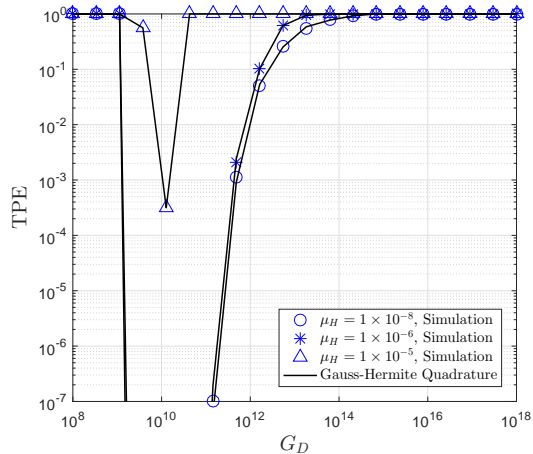}
\caption{TPE versus $G_D$ in the log-scale of Fig. \ref{TPE_GD}.}
\label{TPE_GD_Log}
\end{figure}

\begin{figure}
\setlength{\abovecaptionskip}{0pt}
\setlength{\belowcaptionskip}{10pt}
\centering
\includegraphics[width= 3 in]{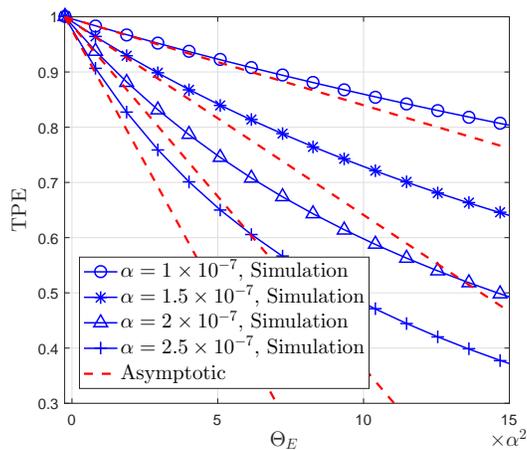}
\caption{TPE versus $\Theta_E$ for $\mu_V=1 \times 10^{-7}$, $\mu_H=1 \times 10^{-8}$, $\sigma_V^2 =1 \times 10^{-12}$, $\sigma_H^2=1 \times 10^{-13}$, and $\lambda_E=1 \times 10^{-20}$.}
\label{TPE_alpha}
\end{figure}

\subsection{TPRE Simulations}
In this subsection, the same system parameter settings as those in the first paragraph in the VI-B subsection are assumed for convenience.

In Fig. \ref{TPRE_alpha}, the TPRE versus $G_D$ is plotted, where a fluctuation is obvious in the medium $G_D$ region, while a monotonic decreasing is presented in the high $G_D$ region, and this decreasing trend can be approximated by a linear function (independent on $\alpha$, which has been proved in the asymptotic analysis for TPRE) in the log-scale. A smaller $\alpha$ means a more close eavesdropper around the sender, which results in the decrease in the TPRE.

\begin{figure}[!htb]
\setlength{\abovecaptionskip}{0pt}
\setlength{\belowcaptionskip}{10pt}
\centering
\includegraphics[width= 3 in]{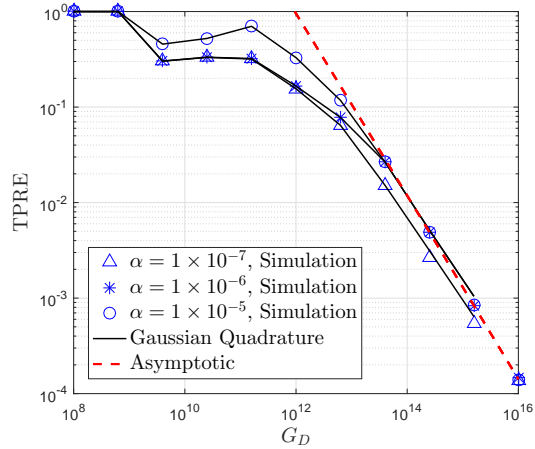}
\caption{TPRE versus $G_D$ for $\mu_V=1 \times 10^{-8}$, $\mu_H=5 \times 10^{-8}$, $\sigma_V^2 =1 \times 10^{-12}$, and $\sigma_H^2=1 \times 10^{-10}$, $\lambda_D=1 \times 10^{-15}$, $\lambda_E =1 \times 10^{-20}$, and $N=300$ for Gaussian quadrature.}
\label{TPRE_alpha}
\end{figure}

There also exists a fluctuation in $(\lambda_D, \lambda_E)=(10^{-15},10^{-20}), (10^{-20},10^{-20})$ cases in Fig. \ref{TPRE_Rayleigh}. In contrast, the TPRE for $(\lambda_D,\lambda_E)=(10^{-15},10^{-15})$ remains constant (equal to 1) before a monotonic decrease. From Fig. \ref{TPRE_Rayleigh}, we can also observe that the TPRE is an increasing function with respect to $\lambda_E$, and a decreasing function with respect to $\lambda_D$, which is easily explained by the joint probability properties in \eqref{TPRE_Define}.

\begin{figure}[!htb]
\setlength{\abovecaptionskip}{0pt}
\setlength{\belowcaptionskip}{10pt}
\centering
\includegraphics[width= 3 in]{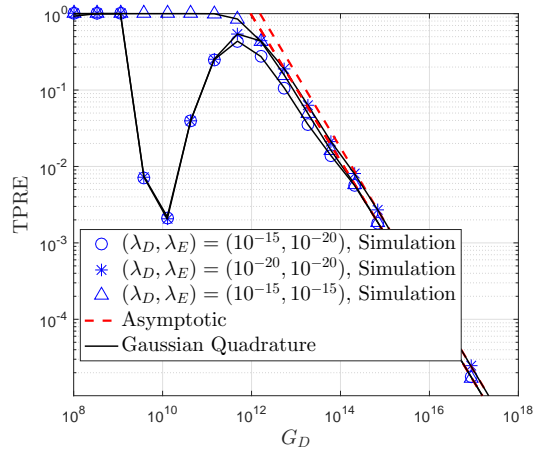}
\caption{TPRE versus $G_D$ for $\mu_V=\mu_H=0$,  $\sigma_V^2 =\sigma_H^2=1 \times 10^{-11}$ (Rayleigh distribution),  $\alpha=1 \times 10^{-6}$, and $N=110$.}
\label{TPRE_Rayleigh}
\end{figure}

\section{Conclusion}\label{conclude_sec}
In this paper, closed-form expressions for TPLR, TPE, and TPRE were derived based on the Gaussian quadrature rule, as well as the corresponding asymptotic formulas valid in the high telescope gain at the legitimate receiver region by using the asymptotic result for the generalized Marcum Q-function which could be also employed to derive the asymptotic expression for outage probability (showing the diversity order and array gain) over Beckmann fading channels. The approximate expressions bases on the Gaussian quadrature rule needs many terms for a high accuracy and the computation becomes more slower with increasing the terms. Alternatively, we can use the asymptotic expressions to get results with a very high accuracy in the high region of the telescope gain at the legitimate receiver.
Moreover, some closed-form expressions or more concise expressions in some special cases (like Rayleigh, Hoyt and Rice pointing error model cases) were also derived for much faster computations.

\appendices
\renewcommand{\thesectiondis}[2]{\Alph{section}:}
\section{proof of Theorem \ref{TPLR_theorem}}\label{proof_TPLR_theorem}
We can rewrite \eqref{TPLR_def} as
\begin{align}\label{TPLR_prob}
{\rm TPLR} &= \Pr \Big\{ {\theta _V^2 + \theta _H^2 \le {\Theta _D}} \Big\} = \Pr \Big\{ {\theta _H^2 \le {\Theta _D} - \theta _V^2} \Big\} 
= \Pr \left\{ {\frac{{\theta _H^2}}{{\sigma _H^2}} \le \frac{{{\Theta _D} - \theta _V^2}}{{\sigma _H^2}}} \right\},
\end{align}
where $\theta _H^2 / \sigma _H^2$ follows the non-central Chi-squared distribution with the unit degree, and the corresponding cumulative distribution function (CDF) is given by \cite{Non_Central}
\begin{align}
{F_{\theta _H^2/\sigma _H^2}}\left( x \right) = 1 - {Q_{\frac{1}{2}}}\left( {\sqrt \lambda  ,\sqrt {x\mathbb{I}\left\{ {x \ge 0} \right\}} } \right).
\end{align}

By using the CDF of $\theta _H^2 / \sigma _H^2$ and iterative expectation operation, the TPLR can be derived as
\begin{align}\label{TPLR_integral}
&{\rm{TPLR}} = {\mathbb{E}_{{\theta _V}}}\left\{ {1 - {Q_{\frac{1}{2}}}\left( {\sqrt \lambda  ,\sqrt {\frac{{{\Theta _D} - \theta _V^2}}{{\sigma _H^2}}\mathbb{I}\left\{ {{\Theta _D} \ge \theta _V^2} \right\}} } \right)} \right\} \notag\\
& \mathop  = \limits^{\left( a \right)} 1 - \frac{1}{{\sqrt {2\pi \sigma _V^2} }}\int_{ - \infty }^{ + \infty } {\exp \left( { - \frac{{{{\left( {{\theta _V} - {\mu _V}} \right)}^2}}}{{2\sigma _V^2}}} \right)} 
{Q_{\frac{1}{2}}}\left( {\sqrt \lambda  ,\sqrt {\frac{{{\Theta _D} - \theta _V^2}}{{\sigma _H^2}}\mathbb{I}\left\{ {{\Theta _D} \ge \theta _V^2} \right\}} } \right)\dd {\theta _V},
\end{align}
where $\mathbb{E}\{\cdot\}$ denotes the expectation operator, and $(a)$ follows the PDF of $\theta_V$, i.e.,  PDF of Gaussian distribution.
Let $x = \frac{{{\theta _V} - {\mu _V}}}{{\sqrt{2}\sigma _V}}$ in \eqref{TPLR_integral}, and then, we can get the integral form as \eqref{TPLR_Gaussian_Int},
\begin{align}\label{TPLR_Gaussian_Int}
{\rm{TPLR}} = 1 - \frac{1}{{\sqrt \pi  }} 
\int\limits_{ - \infty }^{ + \infty } {\exp \left( { - {x^2}} \right)}
{Q_{\frac{1}{2}}}\left( {\sqrt \lambda  ,\sqrt {\frac{{{\Theta _D} - {{\left( {\sqrt 2 {\sigma _V}x + {\mu _V}} \right)}^2}}}{{\sigma _H^2}}\mathbb{I}\left\{ {{\Theta _D} \ge {{\left( {\sqrt 2 {\sigma _V}x + {\mu _V}} \right)}^2}} \right\}} } \right) \dd x,
\end{align}
which can be approximated by using the Gaussian quadrature rule, shown in Theorem \ref{TPLR_theorem}. 

\section{Proof of Proposition \ref{Q_asym_prop}}\label{proof_marcumQ_prop}
The Marcum Q-function is defined  in the integral form,
\begin{align}
{Q_M}\left( {a,b} \right) = \int_b^\infty  {x{{\left( {\frac{x}{a}} \right)}^{M - 1}}\exp \left( { - \frac{{{x^2} + {a^2}}}{2}} \right)} {I_{M - 1}}\left( {ax} \right) \dd x,
\end{align}
where  $I_\cdot(\cdot)$ denotes the modified Bessel function of the first kind \cite{Gradshteyn}. To derive the asymptotic result for the Marcum Q-function, we first express $I_\cdot(\cdot)$ into the infinite series form,
\begin{align}\label{Bessel_fun}
{I_{M - 1}}\left( {ax} \right) = \sum\limits_{n = 0}^\infty  {\frac{1}{{n!\Gamma \left( {n + M} \right)}}} {\left( {\frac{{ax}}{2}} \right)^{2n + M - 1}}.
\end{align}
When $x \to 0$, it is easy to see that the leading term (i.e., the lowest order term of $x$) for approximating $I_\cdot(\cdot)$ in \eqref{Bessel_fun} is
\begin{align}
{\left. {{I_{M - 1}}\left( {ax} \right) \simeq \frac{1}{{n!\Gamma \left( {n + M} \right)}}{{\left( {\frac{{ax}}{2}} \right)}^{2n + M - 1}}} \right|_{n = 0}}
= \frac{1}{{\Gamma \left( M \right)}}{\left( {\frac{{ax}}{2}} \right)^{M - 1}}.
\end{align}
By using the asymptotic result for $I_\cdot(\cdot)$, the asymptotic result for the Marcum Q-function, as $b \to 0$, can be derived as
\begin{align}
&{Q_M}\left( {a,b} \right) = \underbrace {\int_0^\infty  {x{{\left( {\frac{x}{a}} \right)}^{M - 1}}\exp \left( { - \frac{{{x^2} + {a^2}}}{2}} \right)} {I_{M - 1}}\left( {ax} \right)\dd x}_{ = 1} \notag\\
&\hspace{2cm}-\int_0^b {x{{\left( {\frac{x}{a}} \right)}^{M - 1}}\exp \left( { - \frac{{{x^2} + {a^2}}}{2}} \right)} {I_{M - 1}}\left( {ax} \right)\dd x \notag\\
&\hspace{1cm}\mathop  \simeq \limits^{b \to 0} 1 - \frac{{\exp \left( { - \frac{{{a^2}}}{2}} \right)}}{{\Gamma \left( M \right){2^{M - 1}}}}\int_0^b {{x^{2M - 1}}\exp \left( { - \frac{{{x^2}}}{2}} \right)} \dd x,    
\end{align}
where the integral in the last step can be easily derived in a closed-form by using the definition of the lower incomplete Gamma function \cite{Gradshteyn}, given by
$
\int_0^b {{x^{2M - 1}}\exp \left( { - \frac{{{x^2}}}{2}} \right)} \dd x = {2^{M - 1}}\Upsilon \left( {M,\frac{{{b^2}}}{2}} \right),
$
where $\Upsilon(\cdot,\cdot)$ denotes the lower incomplete Gamma function. By using the  asymptotic result for $\Upsilon(\cdot,\cdot)$, i.e.,
$
\Upsilon \left( {s,x} \right) \to \frac{{{x^s}}}{s} \text{ as }x \to 0,
$
we can finally derive the asymptotic result for the Marcum Q-function as \eqref{Prop_31}.

\section{Proof of Lemma \ref{TPLR_asym_lemma}}\label{proof_TPLR_asym_lemma}
From the derivation in Proposition \ref{Q_asym_prop}, the CDF of the non-central Chi-squared distribution can be approximated by
\begin{align}
{F_{\theta _H^2/\sigma _H^2}}\left( x \right) &= 1 - {Q_{\frac{1}{2}}}\left( {\sqrt \lambda  ,\sqrt {x\mathbb{I}\left\{ {x \ge 0} \right\}} } \right) 
\mathop  \simeq \limits^{x \to 0} \frac{{\exp \left( { - \frac{\lambda }{2}} \right)}}{{\Gamma \left( {1.5} \right)\sqrt 2 }}{x^{0.5}}\mathbb{I}\left\{ {x \ge 0} \right\}.
\end{align}
Substituting the asymptotic expression for $F_{\theta_H^2 / \sigma_H^2}(\cdot)$ into \eqref{TPLR_prob} yeilds
\begin{align}\label{TPLR_asy1}
{\rm{TPLR}}  &\simeq  \frac{{\exp \left( { - \frac{\lambda }{2}} \right)}}{{\Gamma \left( {1.5} \right)\sqrt 2 }}{\mathbb{E}_{{\theta _V}}}\left\{ {{{\left( {\frac{{{\Theta _D} - \theta _V^2}}{{\sigma _H^2}}} \right)}^{\frac{1}{2}}}\mathbb{I}\left\{ {{\Theta _D} \ge \theta _V^2} \right\}} \right\} \notag\\
& = \frac{{\exp \left( { - \frac{\lambda }{2}} \right)}}{{\Gamma \left( {1.5} \right)\sqrt 2 }}\frac{1}{{\sqrt {2\pi \sigma _V^2} }} 
\int_{ - \infty }^{ + \infty } {\exp \left( { - \frac{{{{\left( {{\theta _V} - {\mu _V}} \right)}^2}}}{{2\sigma _V^2}}} \right)}
{\left( {\frac{{{\Theta _D} - \theta _V^2}}{{\sigma _H^2}}} \right)^{\frac{1}{2}}}\mathbb{I}\left\{ {{\Theta _D} \ge \theta _V^2} \right\}\dd {\theta _V} \notag\\
& = \frac{{\exp \left( { - \frac{\lambda }{2}} \right)}}{{\Gamma \left( {1.5} \right)\sqrt 2 }}\frac{1}{{\sqrt {2\pi \sigma _V^2} }} 
\int_{ - \sqrt {{\Theta _D}} }^{\sqrt {{\Theta _D}} } {\exp \left( { - \frac{{{{\left( {{\theta _V} - {\mu _V}} \right)}^2}}}{{2\sigma _V^2}}} \right)}
{\left( {\frac{{{\Theta _D} - \theta _V^2}}{{\sigma _H^2}}} \right)^{\frac{1}{2}}}\dd {\theta _V}.
\end{align}
As $\Theta_D \to 0$, the exponential function in the last step in \eqref{TPLR_asy1} can be approximated by
\begin{align}
\exp \left( { - \frac{{{{\left( {{\theta _V} - {\mu _V}} \right)}^2}}}{{2\sigma _V^2}}} \right)\mathop  \simeq \limits^{{\theta _V} \to 0} \exp \left( { - \frac{{{\mu _V^2}}}{{2\sigma _V^2}}} \right) + o\left( {\theta _V} \right),
\end{align}
Therefore, the asymptotic result for TPLR can be further written as
\begin{align}
{\rm{TPLR}} \simeq & \frac{{\exp \left( { - \frac{\lambda }{2}} \right)}}{{\Gamma \left( {1.5} \right)\sqrt 2 }}\exp \left( { - \frac{{{\mu _V^2}}}{{2\sigma _V^2}}} \right)\frac{\int\nolimits_{ - \sqrt {{\Theta _D}} }^{\sqrt {{\Theta _D}} } {{{\left( {\frac{{{\Theta _D} - \theta _V^2}}{{\sigma _H^2}}} \right)}^{\frac{1}{2}}}\dd {\theta _V}}}{{\sqrt {2\pi \sigma _V^2} }},
\end{align}
where the integral can be easily solved in a closed-form, 
\begin{align}
\int_{ - \sqrt {{\Theta _D}} }^{\sqrt {{\Theta _D}} } {{{\left( {\frac{{{\Theta _D} - \theta _V^2}}{{\sigma _H^2}}} \right)}^{\frac{1}{2}}}d{\theta _V}}  = \frac{\pi }{{2{\sigma _H}}}{\Theta _D}.
\end{align}
Finally, the asymptotic expression for TPLR is derived as \eqref{TPLR_asym}.

\section{Proof of Theorem \ref{TPE_theorem}}\label{proof_TPE_theorem}
The TPE can be  written as
\begin{align}
{\rm TPE} &= \Pr \Big\{ {\theta _V^2 + \alpha {\theta _V} + \theta _H^2 \ge {\Theta _E}} \Big\} 
= \Pr \left\{ {\frac{{\theta _H^2}}{{\sigma _H^2}} \ge \frac{{{\Theta _E} - \theta _V^2 - \alpha {\theta _V}}}{{\sigma _H^2}}} \right\},
\end{align}
where $\theta _H^2 / \sigma _H^2$ follows the non-central Chi-squared distribution with the unit degree and non-centrality parameter $\lambda$.
By using the CDF of $\theta_H^2 / \sigma_H^2$, we can get
\begin{align}\label{TPE_int1}
&{\rm{TPE}}=
{\mathbb{E}_{{\theta _V}}}\left\{ {{Q_{\frac{1}{2}}}\left( {\sqrt \lambda  ,\sqrt {\frac{{{\Theta _E} - \theta _V^2 - \alpha {\theta _V}}}{{\sigma _H^2}}\mathbb{I}\left\{ {{\Theta _E} \ge \theta _V^2 + \alpha {\theta _V}} \right\}} } \right)} \right\} \notag\\
& = \frac{1}{{\sqrt {2\pi \sigma _V^2} }}\int_{ - \infty }^{ + \infty } {\exp \left( { - \frac{{{{\left( {{\theta _V} - {\mu _V}} \right)}^2}}}{{2\sigma _V^2}}} \right)} {Q_{\frac{1}{2}}}\left( {\sqrt \lambda  ,\sqrt {\frac{{{\Theta _E} - \theta _V^2 - \alpha {\theta _V}}}{{\sigma _H^2}}\mathbb{I}\left\{ {{\Theta _E} \ge \theta _V^2 + \alpha {\theta _V}} \right\}} } \right)\dd {\theta _V}.
\end{align}
Let $x=\frac{\theta_V-\mu_V}{\sqrt{2} \sigma_V}$ in \eqref{TPE_int1}, and then we can get the following integral form,
\begin{align}
{\rm{TPE}} = \frac{1}{{\sqrt \pi  }}\int_{ - \infty }^{ + \infty } {\exp \left( { - {x^2}} \right)}  f_{\rm TPE} (x) \dd x,
\end{align}
which can be easily approximated by using the GHQ method, shown in Theorem \ref{TPE_theorem}.

\section{Proof of Lemma \ref{TPE_asym_lemma}}\label{proof_TPE_asym_lemma}
By using the asymptotic expression for $F_{\theta_H^2 / \sigma_H^2}(\cdot)$,
the TPE can be written as
\begin{align}\label{TPE_Asym1}
&{\rm{TPE}}
\simeq 1 - \frac{{\exp \left( { - \frac{\lambda }{2}} \right)}}{{\sqrt 2 \Gamma \left( {1.5} \right)}} 
{\mathbb{E}_{{\theta _V}}}\left\{ {{{\left( {\frac{{{\Theta _E} - \theta _V^2 - \alpha {\theta _V}}}{{\sigma _H^2}}} \right)}^{\frac{1}{2}}}\mathbb{I}\left\{ {{\Theta _E} \ge \theta _V^2 + \alpha {\theta _V}} \right\}} \right\},
\end{align}
where
\begin{align}\label{E_thetaV}
&{\mathbb{E}_{{\theta _V}}}\left\{ {{{\left( {\frac{{{\Theta _E} - \theta _V^2 - \alpha {\theta _V}}}{{\sigma _H^2}}} \right)}^{\frac{1}{2}}}\mathbb{I}\left\{ {{\Theta _E} \ge \theta _V^2 + \alpha {\theta _V}} \right\}} \right\} \notag\\
& = \frac{1}{{\sqrt {2\pi \sigma _V^2} }}\int_{ - \infty }^{ + \infty } {\exp \left( { - \frac{{{{\left( {{\theta _V} - {\mu _V}} \right)}^2}}}{{2\sigma _V^2}}} \right)}
{\left( {\frac{{{\Theta _E} - \theta _V^2 - \alpha {\theta _V}}}{{\sigma _H^2}}} \right)^{\frac{1}{2}}} 
\mathbb{I}\left\{ {{\Theta _E} \ge \theta _V^2 + \alpha {\theta _V}} \right\}\dd {\theta _V}.
\end{align}
To have a possible integral interval, the indicator function shows that we must have
\begin{align}
\theta _V^2 + \alpha {\theta _V} - {\Theta _E} \le 0.
\end{align}
To have real solutions for the above inequality, we  have
\begin{align}
{\alpha ^2} + 4{\Theta _E} \ge 0 \Longrightarrow {\Theta _E} \ge \frac{{ - {\alpha ^2}}}{4}.
\end{align}
If $\Theta_E < -\frac{-\alpha^2}{4}$, the indicator function in \eqref{E_thetaV} is always equal to zero. Therefore, the TPE is always 1, which is straightforward. In the following, we  consider the case of $\Theta_E \ge -\frac{-\alpha^2}{4}$.
Let
\begin{align}
{\theta _1} = \frac{{ - \alpha  + \sqrt {{\alpha ^2} + 4{\Theta _E}} }}{2}, \notag \quad
{\theta _2} = \frac{{ - \alpha  - \sqrt {{\alpha ^2} + 4{\Theta _E}} }}{2}. \notag
\end{align}
The integral in \eqref{E_thetaV} can be derived as \eqref{TPE_Asym2}, 
\begin{align}\label{TPE_Asym2}
&\mathbb{I}\left\{ {{\Theta _E} \ge \frac{{ - {\alpha ^2}}}{4}} \right\}\int\limits_{{\theta _2}}^{{\theta _1}} {\exp \left( { - \frac{{{{\left( {{\theta _V} - {\mu _V}} \right)}^2}}}{{2\sigma _V^2}}} \right)}
{\left( {\frac{{{\Theta _E} - \theta _V^2 - \alpha {\theta _V}}}{{\sigma _H^2}}} \right)^{\frac{1}{2}}}\dd {\theta _V}  \notag\\
&\mathop  \simeq \limits^{{\Theta _E} \to \frac{{ - {\alpha ^2}}}{4}} \mathbb{I}\left\{ {{\Theta _E} \ge \frac{{ - {\alpha ^2}}}{4}} \right\}\exp \left( { - \frac{{{{\left( {\alpha /2 + {\mu _V}} \right)}^2}}}{{2\sigma _V^2}}} \right)
\int\limits_{{\theta _2}}^{{\theta _1}} {{{\left( {\frac{{{\Theta _E} - \theta _V^2 - \alpha {\theta _V}}}{{\sigma _H^2}}} \right)}^{\frac{1}{2}}}\dd {\theta _V}} \notag\\
&= \mathbb{I}\left\{ {{\Theta _E} \ge \frac{{ - {\alpha ^2}}}{4}} \right\}\exp \left( { - \frac{{{{\left( {\alpha /2 + {\mu _V}} \right)}^2}}}{{2\sigma _V^2}}} \right)\frac{{\pi \left( {{\alpha ^2} + 4{\Theta _E}} \right)}}{{8{\sigma _H}}}.
\end{align}
Finally, combining \eqref{TPE_Asym1} and \eqref{TPE_Asym2} yields \eqref{TPE_Asym}.

\section{Proof of Theorem \ref{TPRE_theorem}}\label{proof_TPRE_theorem}
Using the iterative expectation method, i.e., $\mathbb{E}\left\{ {XY} \right\} = {\mathbb{E}_Y}\left\{ {{\mathbb{E}_{X\left| Y \right.}}\left( {X\left| Y \right.} \right)} \right\}$ in \eqref{TPRE_Prob}, we have
\begin{align}\label{TPRE_exp}
&{\rm{TPRE}} = {\mathbb{E}_{{\theta _V}}}\left\{ {\mathbb{I}\left\{ {\frac{{{\Theta _E} - \theta _V^2 - \alpha {\theta _V}}}{{\sigma _H^2}} \le \frac{{{\Theta _D} - \theta _V^2}}{{\sigma _H^2}}} \right\}} \right. \notag\\ 
&\hspace{3cm}\times\left[ {{F_{\theta _H^2/\sigma _H^2}}\left( {\frac{{{\Theta _D} - \theta _V^2}}{{\sigma _H^2}}} \right)} \right.\left. {\left. { - {F_{\theta _H^2/\sigma _H^2}}\left( {\frac{{{\Theta _E} - \theta _V^2 - \alpha {\theta _V}}}{{\sigma _H^2}}} \right)} \right]} \right\},
\end{align}
where
\begin{align}
&{F_{\theta _H^2/\sigma _H^2}}\left( {\frac{{{\Theta _D} - \theta _V^2}}{{\sigma _H^2}}} \right) - {F_{\theta _H^2/\sigma _H^2}}\left( {\frac{{{\Theta _E} - \theta _V^2 - \alpha {\theta _V}}}{{\sigma _H^2}}} \right) \notag\\
&= {Q_{\frac{1}{2}}}\left( {\sqrt \lambda  ,\sqrt {\frac{{{\Theta _E} - \theta _V^2 - \alpha {\theta _V}}}{{\sigma _H^2}}\mathbb{I}\left\{ {{\Theta _E} \ge \theta _V^2 + \alpha {\theta _V}} \right\}} } \right) 
- {Q_{\frac{1}{2}}}\left( {\sqrt \lambda  ,\sqrt {\frac{{{\Theta _D} - \theta _V^2}}{{\sigma _H^2}}\mathbb{I}\left\{ {{\Theta _D} \ge \theta _V^2} \right\}} } \right),
\end{align}
and
$\mathbb{I}\left\{ {\frac{{{\Theta _E} - \theta _V^2 - \alpha {\theta _V}}}{{\sigma _H^2}} \le \frac{{{\Theta _D} - \theta _V^2}}{{\sigma _H^2}}} \right\} = \mathbb{I}\left\{ {{\theta _V} \ge \frac{{{\Theta _E} - {\Theta _D}}}{\alpha }} \right\}.$
The TPRE is further written as \eqref{TPRE_int},
\begin{align}\label{TPRE_int}
{\rm{TPRE}} =& \frac{1}{{\sqrt {2\pi \sigma _V^2} }}\int_{ - \infty }^{ + \infty } {\left[ {{Q_{\frac{1}{2}}}\left( {\sqrt \lambda  ,\sqrt {\left( {\frac{{{\Theta _E} - \theta _V^2 - \alpha {\theta _V}}}{{\sigma _H^2}}} \right)\mathbb{I}\left\{ {{\Theta _E} \ge \theta _V^2 + \alpha {\theta _V}} \right\}} } \right)} \right.}  \notag\\
&\left. { - {Q_{\frac{1}{2}}}\left( {\sqrt \lambda  ,\sqrt {\frac{{{\Theta _D} - \theta _V^2}}{{\sigma _H^2}}\mathbb{I}\left\{ {{\Theta _D} \ge \theta _V^2} \right\}} } \right)} \right]
\exp \left( { - \frac{{{{\left( {{\theta _V} - {\mu _V}} \right)}^2}}}{{2\sigma _V^2}}} \right)\mathbb{I}\left\{ {{\theta _V} \ge \frac{{{\Theta _E} - {\Theta _D}}}{\alpha }} \right\}\dd {\theta _V}.
\end{align}
Let $x=\frac{\theta_V-\mu_V}{\sqrt{2}\sigma_V}$, and then, the integral form becomes \eqref{TPRE_INT}, 
\begin{align}\label{TPRE_INT}
{\rm{TPRE}} =& \frac{1}{{\sqrt \pi  }}\int_{ - \infty }^{ + \infty } {\exp \left( { - {x^2}} \right)\mathbb{I}\left\{ {x \ge \frac{{{\Theta _E} - {\Theta _D}}}{{\alpha \sqrt 2 {\sigma _V}}} - \frac{{{\mu _V}}}{{\sqrt 2 {\sigma _V}}}} \right\}} \notag\\
&\times \left[ {{Q_{\frac{1}{2}}}\left( {\sqrt \lambda  , \frac{1}{\sigma _H} \sqrt { {{\Theta _E} - {{\left( {\sqrt 2 {\sigma _V}x + {\mu _V}} \right)}^2} - \alpha \left( {\sqrt 2 {\sigma _V}x + {\mu _V}} \right)} } } \right.} \right. \notag\\
&\hspace{3cm}\times \mathbb{I}\left\{ {{\Theta _E} \ge {{\left( {\sqrt 2 {\sigma _V}x + {\mu _V}} \right)}^2} + \alpha \left( {\sqrt 2 {\sigma _V}x + {\mu _V}} \right)} \right\} \Bigg) \notag\\
&\left. { - {Q_{\frac{1}{2}}}\left( {\sqrt \lambda  , \frac{1}{\sigma _H} \sqrt {\left( {{\Theta _D} - {{\left( {\sqrt 2 {\sigma _V}x + {\mu _V}} \right)}^2}} \right) \mathbb{I}\left\{ {{\Theta _D} \ge {{\left( {\sqrt 2 {\sigma _V}x + {\mu _V}} \right)}^2}} \right\}} } \right)} \right]\dd x.
\end{align}
By using the GHQ method, an approximate result is shown in Theorem \ref{TPRE_theorem}.

\section{Derivation of the Joint PDF $f_{X,Y}(\cdot,\cdot)$}\label{PDF_appendix}
Let $Z=\theta_H^2$. We have $X = \theta _V^2 + Z$ and $Y = 2\theta _V^2 + 2Z + 2\alpha {\theta _V}$.
The PDF of $Z$ is
\begin{align}
{f_Z}\left( z \right) = \frac{1}{{\sqrt {2\pi {\sigma ^2}} }}{z^{ - \frac{1}{2}}}\exp \left( { - \frac{z}{{2{\sigma ^2}}}} \right), \quad z\ge 0.
\end{align}
The Jacobian matrix is
\begin{align}
&\left| {\begin{array}{*{20}{c}}
{\frac{{\partial X}}{{\partial {\theta _V}}}}&{\frac{{\partial X}}{{\partial Z}}}\\
{\frac{{\partial Y}}{{\partial {\theta _V}}}}&{\frac{{\partial Y}}{{\partial Z}}}
\end{array}} \right| = \left| {\begin{array}{*{20}{c}}
{2{\theta _V}}&1\\
{4{\theta _V} + 2\alpha }&2
\end{array}} \right|
= 4{\theta _V} - 4{\theta _V} - 2\alpha  =  - 2\alpha.
\end{align}
The joint PDF of $X$ and $Y$ is finally derived as
\begin{align}
&{f_{X,Y}}\left( {x,y} \right) = {f_{{\theta _V}}}\left( {{\theta _V}} \right){f_Z}\left( z \right){\left| {\frac{{\partial \left( {X,Y} \right)}}{{\partial \left( {{\theta _V},Z} \right)}}} \right|^{ - 1}} 
= \frac{1}{{2\alpha }}{f_{{\theta _V}}}\left( {\frac{{y - 2x}}{{2\alpha }}} \right){f_Z}\left( {x - {{\left( {\frac{{y - 2x}}{{2\alpha }}} \right)}^2}} \right) \notag\\
& = \frac{1}{{4\pi {\sigma ^2}\alpha }}\exp \left( { - \frac{x}{{2{\sigma ^2}}}} \right){\left[ {x - {{\left( {\frac{{y - 2x}}{{2\alpha }}} \right)}^2}} \right]^{ - \frac{1}{2}}} 
\mathbb{I}\Big\{ { - 2\alpha \sqrt x  + 2x \le y \le 2x + 2\alpha \sqrt x } \Big\}\mathbb{I}\Big\{ {x \ge 0} \Big\}.
\end{align}

\section{Derivation of the Double Integral in TPRE}\label{double_int_appendix}
In order to simplify the notations, we define the double integral in \eqref{TPRE_Ray_int} as $\mathcal{I}_1$.
By using the integral identity
\begin{align}
&\int\limits_{{\Theta _E}}^{2x + 2\alpha \sqrt x } {{{\left[ {x - {{\left( {\frac{{y - 2x}}{{2\alpha }}} \right)}^2}} \right]}^{ - \frac{1}{2}}}} \dd y
= \alpha \pi  - 2\alpha \arcsin \left( {\frac{{{\Theta _E} - 2x}}{{2\alpha \sqrt x }}} \right),
\end{align}
$\mathcal{I}_1$ can be written as
\begin{align}
{\mathcal{I}_1} =& \int_0^{{\Theta _D}} {\exp \left( { - \frac{x}{{2{\sigma ^2}}}} \right)}
\left[ {\alpha \pi  - 2\alpha \arcsin \left( {\frac{{\max \left\{ {{\Theta _E}, - 2\alpha \sqrt x  + 2x} \right\} - 2x}}{{2\alpha \sqrt x }}} \right)} \right] \dd x,
\end{align}
which can be easily calculated by using Gaussian quadrature method,
${\mathcal{I}_1} \approx \frac{{{\Theta _D}}}{2}\sum\limits_{i = 0}^N {{\omega _i^\prime}} {f^\prime_{{\rm{TPRE}}}}\left( {\frac{{{\Theta _D}}}{2}{x_i^\prime} + \frac{{{\Theta _D}}}{2}} \right).$



\ifCLASSOPTIONcaptionsoff
  \newpage
\fi

\end{document}